\documentclass[11pt,a4paper]{article}
\usepackage{amssymb}
\usepackage{amsmath}
\usepackage{amsfonts}
\usepackage{bm}
\usepackage{bbm}
\usepackage{amsthm}
\usepackage{mathrsfs}
\usepackage{hyperref}
\usepackage{color}
\usepackage[margin=2.41cm]{geometry}
\usepackage[all,cmtip]{xy}
\usepackage[utf8]{inputenc}
\usepackage{graphicx}
\usepackage{varwidth}
\usepackage{comment}
\usepackage{enumitem}

\usepackage{mathtools}

\usepackage{upgreek}
\usepackage{rotating}

\usepackage{tikz}
\usetikzlibrary{shapes.geometric}
\usetikzlibrary{patterns.meta}
\usetikzlibrary{cd}
\usetikzlibrary{arrows}

\definecolor{darkred}{rgb}{0.8,0.1,0.1}
\hypersetup{
	colorlinks=true,         % false: boxed links; true: colored links,false is default
	urlcolor=darkred,
	linkcolor=darkred,
	citecolor=blue,
}

\theoremstyle{plain}
\newtheorem{theo}{Theorem}[section]

\newtheorem{propo}[theo]{Proposition}

\theoremstyle{definition}
\newtheorem{defi}[theo]{Definition}
\newtheorem{assu}[theo]{Assumption}

\newtheorem{exx}[theo]{Example}
\newenvironment{ex}
{\pushQED{\qed}\exx}
{\popQED\endexx}

\newtheorem{remm}[theo]{Remark}
\newenvironment{rem}
{\pushQED{\qed}\remm}
{\popQED\endremm}

\numberwithin{equation}{section}

\def\nn{\nonumber}

\def\bbR{\mathbb{R}}
\def\bbC{\mathbb{C}}
\def\bbN{\mathbb{N}}
\def\bbZ{\mathbb{Z}}

\def\id{\mathrm{id}}

\def\deg{\mathrm{deg}}
\def\dd{\mathrm{d}}

\def\cc{\mathrm{c}}

\def\dim{\mathrm{dim}}
\def\1{I}

\def\ad{\mathrm{ad}}

\def\FFF{\mathfrak{F}}

\def\A{\mathcal{A}}

\def\E{\mathcal{E}}
\def\F{\mathcal{F}}

\def\L{\mathcal{L}}
\def\M{\mathcal{M}}
\def\O{\mathcal{O}}

\newcommand\und[1]{\underline{#1}}

\newcommand{\pair}[2]{\langle\!\langle #1 , #2 \rangle\!\rangle}

\def\g{\mathfrak{g}}
\def\h{\mathfrak{h}}

\def\CP{\mathbb{C}P^1}
\def\delbar{\overline{\partial}}

\def\sk{\vspace{2mm}}

\makeatletter
\let\@fnsymbol\@alph
\makeatother

\title{%
On the structure of higher-dimensional integrable field theories
}

\author{%
Marco Benini$^{1,2,a}$, Ryan A.~Cullinan$^{3,b}$, Alexander Schenkel$^{4,c}$\ and\ Beno\^{\i}t Vicedo$^{3,d}$\vspace{4mm}\\
{\small ${}^1$ Dipartimento di Matematica, Dipartimento di Eccellenza 2023-27, Universit\`a di Genova,}\\
{\small Via Dodecaneso 35, 16146 Genova, Italy.}\vspace{2mm}\\
{\small ${}^2$ INFN, Sezione di Genova,}\\
{\small Via Dodecaneso 33, 16146 Genova, Italy.}\vspace{2mm}\\
{\small ${}^3$ Department of Mathematics, University of York,}\\
{\small Heslington, York YO10 5GH, United Kingdom.}\vspace{2mm}\\
{\small ${}^4$~Dipartimento di Matematica, Universit{\`a} di Trento and INFN-TIFPA,}\\
{\small Via Sommarive 14, 38123 Povo (Trento), Italy.}\vspace{4mm}\\
{\small \begin{tabular}{ll}
Email: & ${}^a$~\href{mailto:marco.benini@unige.it}{\texttt{marco.benini@unige.it}}\\
& ${}^b$~\href{mailto:ryan.cullinan@york.ac.uk}{\texttt{ryan.cullinan@york.ac.uk}}\\
& ${}^c$~\href{mailto:alexander.schenkel@unitn.it}{\texttt{alexander.schenkel@unitn.it}}\\
& ${}^d$~\href{mailto:benoit.vicedo@gmail.com}{\texttt{benoit.vicedo@gmail.com}}
\vspace{2mm}
\end{tabular}
}
}

\date{April 2026}

\begin{document}

\maketitle

\begin{abstract}
\noindent We propose a general framework for integrable field theories in arbitrary spacetime dimension $d+1$ which is based on $d$-term $L_\infty$-algebras. Specifically, we introduce cyclic $L_\infty$-algebras describing topological-holomorphic higher Chern-Simons theories on $M \times \mathbb{C}P^1$ with suitable singularity structures and boundary conditions, controlled by a meromorphic $1$-form on $\mathbb{C}P^1$. Using homological perturbation theory and homotopy transfer, we construct weakly equivalent models describing $(d+1)$-dimensional field theories on $M$. Their integrability is witnessed by a natural map to an $L_\infty$-algebra describing higher Lax connections, yielding conserved charges associated with higher-dimensional cycles in $M$. The resulting theories admit natural action functionals and recover the Costello-Yamazaki construction in $2$ dimensions.
\end{abstract}
\vspace{-1mm}

\paragraph*{Keywords:} topological-holomorphic higher Chern-Simons theories, 
higher-dimensional integrable field theories, $L_\infty$-algebras
\vspace{-2mm}

\paragraph*{MSC 2020:} 70Sxx, 81Txx, 55Uxx
\vspace{-2mm}

\renewcommand{\baselinestretch}{0.8}\normalsize
\tableofcontents
\renewcommand{\baselinestretch}{1.0}\normalsize

%%%%%%%%%%%%%%%%%%%%%%%%%%%%%%%%%%%%%%%%%%%%%%%%
%%%%%%%%%%%%%%%%%%%%%%%%%%%%%%%%%%%%%%%%%%%%%%%%

\section{\label{sec:intro}Introduction and summary}
One of the hallmarks of integrable field theories is that their non-linear 
equations of motion admit a reformulation as the compatibility condition for an 
over-determined auxiliary linear system. This observation underpins the Lax 
formalism \cite{Lax}, which guarantees the existence of infinite towers of conserved 
charges and allows the non-linear dynamics to be linearized via the method of 
inverse scattering \cite{Gardner:1967wc, Ablowitz:1974ry}. Typically, the Lax operators 
of the auxiliary linear system depend meromorphically on a complex variable, 
known as the spectral parameter, and take values in a finite-dimensional Lie algebra, 
hinting at a deeper geometric origin of these objects. Indeed, this Lie-algebraic 
structure naturally suggests an interpretation of the Lax formalism in terms of 
gauge theory, where the Lax operators arise as covariant derivatives associated with a connection.
\sk

A striking realization of this idea has emerged in recent years through 
the pioneering work of Costello and Yamazaki \cite{CY3}. They showed that 
$2$-dimensional integrable field theories on a spacetime manifold $\Sigma$ 
can be constructed from a $4$-dimensional topological-holomorphic Chern-Simons 
theory on the product manifold $\Sigma \times \CP$ after imposing suitable 
singularity structures and boundary conditions on the $4$-dimensional gauge 
field along surface defects located at specific marked points in $\CP$, see 
also \cite{DLMV,BSV,Lacroix,BSV2} for more details.
From this perspective, the auxiliary linear system is reinterpreted as the 
equation for a flat section over the topological directions $\Sigma$, while 
the spectral parameter becomes the complex coordinate on $\CP$. In this way, 
the Lax formalism emerges directly from a higher-dimensional gauge theory, 
with both the Lax connection and spectral parameter acquiring a unified geometric origin.
\sk

Remarkably, this geometric paradigm extends beyond $2$-dimensional 
integrable field theories. Indeed, the $4$-dimensional anti-self-dual Yang-Mills 
(ASDYM) equation is a celebrated master integrable system giving rise to many 
lower-dimensional integrable models by symmetry reduction. Bittleston and Skinner 
showed in \cite{Bittleston:2020hfv} that it can be obtained from $6$-dimensional 
holomorphic Chern-Simons theory on projective twistor space $\mathbb{P}\mathbb{T}$ 
with suitable boundary conditions imposed on the $6$-dimensional holomorphic 
gauge field along codimension $2$ defects. In this setting, the spectral $\CP$ 
is elegantly realized as the twistor lines of the underlying twistor geometry.
\sk

The $4$-dimensional topological-holomorphic Chern-Simons theory 
of Costello and Yamazaki arises as a symmetry reduction of this 
$6$-dimensional holomorphic Chern-Simons theory \cite{Bittleston:2020hfv}. 
Thus, $2$-dimensional integrable field theories descend from the latter 
in two separate ways: via $4$-dimensional topological-holomorphic 
Chern-Simons theory and as reductions of $4$-dimensional ASDYM equations and 
generalizations thereof. This `diamond' of connections to $6$-dimensional 
holomorphic Chern-Simons theory has been explored for various $2$-dimensional 
integrable field theories in \cite{Cole:2023umd, Cole:2024sje, Chatzis:2025lly, Ashwinkumar:2026zrr}.
There is also an intermediate symmetry reduction to a $5$-dimensional partially 
holomorphic Chern-Simons theory, with one topological and two holomorphic 
directions \cite{Bittleston:2020hfv}. Building upon this setup, the famous 
$3$-dimensional Kadomtsev-Petviashvili (KP) equation was recently obtained in 
\cite{Bittleston:2025gxr} from a non-commutative deformation of a 
$5$-dimensional partially holomorphic Poisson-Chern-Simons theory.
\sk

This series of works on $6$-dimensional holomorphic Chern-Simons theories, 
and their various reductions and deformations, highlights a fundamental dichotomy 
in the geometric origin of Lax operators. If a higher-dimensional 
topological-holomorphic Chern-Simons theory is to encode Lax integrable 
field theories then the underlying manifold on which it is formulated 
should contain at least one holomorphic direction corresponding to the spectral $\CP$. 
However, as articulated in the interesting recent work \cite{Cole:2025zmq}, 
the character of the resulting Lax integrability is then determined 
by the nature of the remaining directions.
\sk

If all other directions are topological, then the setup yields what one can refer to as a 
`topologically integrable' field theory, characterized by infinite towers of 
conserved charges arising from holonomies of flat connections along non-trivial cycles
which are invariant under continuous deformations. The prototypical
examples are $2$-dimensional integrable field theories arising from $4$-dimensional
topological-holomorphic Chern-Simons theory. Another edge case 
is furnished by $1$-dimensional integrable systems, such as the 
finite-dimensional Gaudin model or more generally the Hitchin system, 
which arise from $3$-dimensional mixed BF theory 
\cite{Vicedo:2022mrm, Caudrelier:2025xtx}.
\sk

If, on the other hand, some of the remaining directions are taken 
to be holomorphic, then one obtains what can be referred to as `holomorphically 
integrable' field theories, following \cite{Cole:2025zmq}. In this case,
the physical spacetime itself carries an intrinsic complex structure, 
and the associated auxiliary linear system is formulated in terms of 
anti-holomorphic derivatives with respect to distinguished spacetime coordinates 
which depend on the spectral parameter. This class includes the $4$-dimensional 
ASDYM equations and their generalizations, the $3$-dimensional 
integrable field theories derived from $5$-dimensional partially holomorphic 
Chern-Simons theory \cite{Bittleston:2020hfv}, and the novel type 
of $2$-dimensional integrable field theories recently obtained 
from $4$-dimensional holomorphic BF theory in \cite{Cole:2025zmq}.
Much like ASDYM itself, these `holomorphically integrable' field theories 
do not admit towers of topological conserved charges supported on non-trivial cycles. 
Instead, their characteristic feature is that they serve as higher-dimensional 
master theories admitting reductions to lower-dimensional 
topologically integrable field theories. The underlying reason is that 
their Lax description is still governed by an ordinary $1$-form connection.
\sk

Indeed, $6$-dimensional holomorphic Chern-Simons theory on 
$\mathbb{P}\mathbb{T}$, while providing a unifying geometric framework for 
integrability in dimensions up to $4$, remains an ordinary gauge theory. 
As such, the integrable field theories it produces
are governed by Lie algebra-valued $1$-form Lax connections, 
whose holonomies are intrinsically supported on $1$-dimensional cycles. 
This is sufficient in $2$ dimensions, where integrals of motion are naturally 
realized as line operators, and it underlies the role of ASDYM as a master theory 
for lower-dimensional integrable systems. However, in higher-dimensional topologically
integrable field theories, the topological conserved charges should be supported on higher-dimensional 
cycles, in particular of codimension $1$. From this perspective, $1$-form connections 
are fundamentally inadequate as their associated holonomies cannot capture the 
expected structure of higher-dimensional topological conserved charges. This mismatch signals 
a genuine geometric obstruction to extending the Lax formalism for topologically integrable
field theories beyond $2$ dimensions within the framework of ordinary gauge theory.
\sk

This obstruction points to the need for a genuinely higher-gauge-theoretic generalization. 
While the idea of using higher connections to describe higher-dimensional integrability
has already emerged in \cite{Alvarez:1997ma}, attempts towards a concrete realization 
only appeared very recently in the work \cite{SV} by two of the authors and also in \cite{CL}.
These works introduced a $5$-dimensional topological-holomorphic $2$-Chern-Simons theory 
on the product $M \times \CP$ of a $3$-dimensional manifold $M$ and the spectral $\CP$.
The absence of holomorphic spacetime directions places this model firmly in the class expected 
to yield $3$-dimensional topologically integrable field theories, making it a 
natural higher-dimensional analogue of the original framework of Costello and Yamazaki \cite{CY3}.
Unfortunately, these initial formulations encountered a structural limitation. 
While a consistent higher-gauge-theoretic description should incorporate both $1$- and $2$-gauge transformations, 
doing so in the usual higher-gauge-theoretic setting forces the fake-flatness condition, 
which severely constrains the curvature of the $1$-form connection and thereby degenerates
the desired action functionals. To avoid these issues and retain a
traditional action formulation, these early attempts were unable to take into account the 
full $2$-gauge symmetry, which made it difficult to properly interpret and analyze 
the resulting $3$-dimensional integrable field theories.
\sk

A natural resolution of this tension between incorporating higher gauge symmetry 
and avoiding the fake-flatness constraint is to pass from strict to homotopy-coherent 
algebraic structures. In recent work \cite{BSV2} by three of the authors, this 
homotopy-theoretic framework was developed in the context of $4$-dimensional 
topological-holomorphic Chern-Simons theory with general singularity structures 
and local boundary conditions, by formulating the theory in the powerful
language of $L_\infty$-algebras.
Within this framework, the construction of $2$-dimensional topologically integrable field theories 
can be understood as a homotopy transfer to the cohomology along the spectral $\CP$, 
thereby providing a rigorous homotopy-algebraic formulation of the equivalence 
between the $4$-dimensional gauge theory and its associated $2$-dimensional integrable field theory.
\sk

The purpose of the present paper is to extend this perspective 
to higher dimensions. Specifically, we seek a formulation that both accommodates 
higher gauge structures beyond ordinary Lie algebras and describes topologically integrable field theories 
in spacetime dimensions beyond $2$. Our proposal is that these generalizations are both 
controlled by finite-dimensional $d$-term $L_\infty$-algebras $(\mathfrak g,\ell^{\mathfrak g})$.
In particular, these structures will simultaneously encode the `higher structure Lie algebras' for 
$(d+3)$-dimensional topological-holomorphic Chern-Simons theories, capture the algebraic properties of 
higher Lax connections, and organize the field content, (higher) gauge symmetries
and interaction terms of $(d+1)$-dimensional topologically integrable field theories.
\sk

We now outline the structure of the paper and describe our main results. 
We begin in Section \ref{sec:prelim} with a brief review of $L_\infty$-algebras, 
their tensor products with commutative dg-algebras, and cyclic structures. 
In particular, we recall their standard interpretation in the context of field theory: 
The fields are given by degree $1$ elements, whose dynamics are governed 
by the Maurer-Cartan equation, while degree $0$ elements describe infinitesimal 
gauge transformations and elements of negative degree encode higher gauge symmetries. 
The elements of degree $\geq 2$ capture the corresponding anti-fields from the BV formalism.
In this way, an $L_\infty$-algebra packages the field content, 
equations of motion, and (higher) gauge structure of a field theory 
in a homotopy-coherent manner. Importantly, a cyclic structure of degree $-3$ 
on the $L_\infty$-algebra is required to define an action functional.
\sk

In Section \ref{sec:higherCS}, we introduce for any choice 
of $d$-term $L_\infty$-algebra $(\mathfrak g,\ell^{\mathfrak g})$
and $(d+1)$-dimensional manifold $M$ a corresponding topological-holomorphic 
$(\mathfrak g,\ell^{\mathfrak g})$-Chern-Simons theory without singularities 
and boundary conditions on $X = M \times \CP$. This is described by
the $L_\infty$-algebra $(\E(X),\ell) = (\mathfrak g,\ell^{\mathfrak g})\otimes \Omega^{\bullet,(0,\bullet)}(X)$ 
obtained by tensoring $(\mathfrak g,\ell^{\mathfrak g})$ 
with the de Rham-Dolbeault commutative dg-algebra on $X$. This $L_\infty$-algebra does not carry a cyclic structure
and therefore does not yet define a theory with an action functional. In order to construct a 
cyclic structure, and hence define the associated action functional, 
we will suitably modify the $L_\infty$-algebra $(\E(X), \ell)$ by incorporating 
singularities and imposing boundary conditions. These are both controlled by a 
choice of meromorphic $1$-form $\omega$ on $\CP$.
\sk

The incorporation of singularities is formalized in Section \ref{sec:higherLax} 
via the notion of a singularity structure introduced in Definition \ref{def:singularity}. 
This consists of a family $\mathscr{D}$ of non-negative divisors on $\CP$ satisfying 
compatibility conditions determined by the $L_\infty$-brackets and a chosen 
cyclic structure on $(\mathfrak g,\ell^{\mathfrak g})$. To such a singularity structure 
$\mathscr{D}$ we associate an $L_\infty$-algebra $(\E_{\mathscr{D}}(X), \ell)$
in which the various field components are allowed prescribed singularities, 
supported within the divisor $(\omega)_0$ of zeros of the meromorphic 
$1$-form $\omega$. We show in Proposition \ref{prop:higherLax} that this $L_\infty$-algebra 
is weakly equivalent to the $L_\infty$-algebra $(\L(M),\ell^\prime)$ of higher Lax 
connections on $M$, whose Maurer-Cartan elements are flat 
$(\mathfrak g,\ell^{\mathfrak g})$-connections on $M$ depending meromorphically 
on the spectral parameter, with poles controlled by the singularity structure $\mathscr{D}$.
\sk

The imposition of boundary conditions is subsequently formalized 
in Section \ref{sec:higherIFT} via the notion of a local boundary condition 
introduced in Definition \ref{def:bdycondition}. This consists of a 
family $\mathscr{B}$ of non-positive divisors on $\CP$, supported within the 
divisor $(\omega)_\infty$ of poles of the meromorphic $1$-form $\omega$, 
satisfying compatibility conditions parallel to those for singularity structures. 
Additionally, given a singularity structure $\mathscr{D}$, the local boundary condition 
$\mathscr{B}$ is required to satisfy a degree constraint which links it 
to the chosen singularity structure $\mathscr{D}$ in a way that will ensure 
(a higher generalization of) the usual relation between the equations of motion
of an integrable field theory and its Lax formulation.
To any such pair $(\mathscr{D},\mathscr{B})$ we associate an 
$L_\infty$-algebra $(\E_{\mathscr{D}+\mathscr{B}}(X), \ell)$ describing 
a topological-holomorphic $(\mathfrak g,\ell^{\mathfrak g})$-Chern-Simons theory 
on $X$ with singularities and boundary conditions. This $L_\infty$-algebra 
admits a natural cyclic structure of degree $-3$ on its compactly supported 
sections, providing an action functional for the resulting theory. For $d=1$, 
we recover the usual topological-holomorphic Chern-Simons theory of Costello and Yamazaki, 
while for $d \geq 2$ we obtain genuinely higher-gauge-theoretic and higher-dimensional analogues.
\sk

In Proposition \ref{prop:IFT} we use homological perturbation theory, 
together with homotopy transfer to the cohomology along the spectral 
$\CP$, to construct a weakly equivalent model $(\mathcal{F}(M), \ell^\prime)$ 
defined directly on $M$ for the $L_\infty$-algebra $(\E_{\mathscr{D}+\mathscr{B}}(X), \ell)$. 
The key structural result of Section \ref{sec:higherIFT} is that this model 
describes a $(d+1)$-dimensional topologically integrable field theory on $M$. Specifically, 
in Theorem \ref{theo:mapLax} we construct a weak $L_\infty$-morphism from 
$(\mathcal{F}(M), \ell^\prime)$ to the $L_\infty$-algebra $(\mathcal{L}(M), \ell^\prime)$ 
of higher Lax connections on $M$. Consequently, every Maurer-Cartan element of the 
transferred theory determines a higher Lax connection. This generalizes the usual 
relation between the equations of motion of a topologically integrable field theory and its 
Lax formulation from $2$ dimensions to arbitrary spacetime dimension $d+1$. 
Moreover, the higher parallel transport associated with such flat higher Lax 
connections provides a natural source of topological conserved charges, now attached not 
only to loops but also to higher-dimensional cycles in $M$. Finally, for a suitable 
choice of deformation retract data, the above cyclic structure transfers to the 
compactly supported sections of the $L_\infty$-algebra on $M$, so that the 
resulting $(d+1)$-dimensional topologically integrable field theory also admits a natural action functional.
\sk

To illustrate the framework, we provide in Section \ref{sec:examples} explicit 
families of singularity structures and local boundary conditions for arbitrary $d$. 
We then specialize to $d=2$, corresponding to $5$-dimensional topological-holomorphic 
$2$-Chern-Simons theory and its associated $3$-dimensional topologically integrable field theories. 
In this case, choosing as input datum the shifted tangent $L_\infty$-algebra $T[1]\mathfrak h$ 
of an ordinary Lie algebra $\mathfrak h$, we analyze the resulting
linearized dynamics of the corresponding $3$-dimensional topologically integrable field
theories $(\mathcal{F}(M), \ell^\prime)$. This produces examples of over-determined, 
under-determined and well-posed systems of gauge-theoretic partial differential equations, 
thereby giving a first indication of the range of behaviors encompassed by the general formalism. 
Remarkably, the latter example of a well-posed system of gauge-theoretic 
partial differential equations is controlled by a wave equation 
on the $3$-dimensional Minkowski spacetime.

%%%%%%%%%%%%%%%%%%%%%%%%%%%%%%%%%%%%%%%%%%%%%%%%
%%%%%%%%%%%%%%%%%%%%%%%%%%%%%%%%%%%%%%%%%%%%%%%%

\section{\label{sec:prelim}Preliminaries}
In this section we recall some basic definitions 
and constructions concerning $L_\infty$-algebras.
We refer the reader to \cite{KraftSchnitzer} for a 
comprehensive review of this subject and also to 
\cite{JRSW} for a more physicist-friendly introduction.
\begin{defi}\label{def:Linftyalgebra}
An \textit{$L_\infty$-algebra} is a pair $(L,\ell)$ consisting
of a $\bbZ$-graded vector space $L = (L^i)_{i\in\bbZ}$ over $\bbC$ 
and a family $\ell = (\ell_n)_{n\geq 1} = (\ell_1,\ell_2,\ell_3,\dots)$ 
of graded antisymmetric linear maps $\ell_n : L^{\otimes n}\to L$ of degree
$\vert \ell_n\vert = 2-n$ which satisfy the homotopy Jacobi identities
\begin{flalign}\label{eqn:Linftyalgebra}
\sum_{k+l-1=n}~ \sum_{\sigma\in\mathrm{Sh}(l,k-1)} (-1)^{\vert\sigma\vert} ~(-1)^{k-1}~ 
\ell_k \circ \left(\ell_l \otimes \id^{\otimes (k-1)}\right)\circ \gamma_{\sigma}~=~0\quad,
\end{flalign}
for all $n\geq 1$. Here we denote by $\mathrm{Sh}(l,k-1)\subseteq \Sigma_{n}$
the set of all $(l,k-1)$-shuffle permutations, by $(-1)^{\vert\sigma\vert}$
the parity of the permutation $\sigma\in \mathrm{Sh}(l,k-1)$, and by
$\gamma_\sigma : L^{\otimes n}\to L^{\otimes n}$ its action on tensor powers via
the usual symmetric braiding on graded vector spaces involving Koszul signs.
\end{defi}

Let us also recall the following standard construction
of tensor products of $L_\infty$-algebras and commutative dg-algebras.
\begin{propo}\label{prop:tensorproduct}
Let $(L,\ell)$ be an $L_\infty$-algebra and $A$ a commutative dg-algebra,
with differential denoted by $\dd:A\to A$ and $n$-ary multiplication by 
$\mu_n : A^{\otimes n} \to A\,,~a_1\otimes\cdots\otimes a_n\mapsto
a_1\cdots a_n$, for all $n\geq 2$. Then the tensor product $L\otimes A$ of $\bbZ$-graded 
vector spaces carries the structure of an $L_\infty$-algebra with brackets
given by
\begin{subequations}
\begin{flalign}
\xymatrix{
\ell_1^\otimes\,:\,L\otimes A
\ar[rr]^-{\ell_1\otimes \id + \id\otimes \dd}~&&~ L\otimes A
}
\end{flalign}
and
\begin{flalign}
\xymatrix{
\ell_n^\otimes\,:\, \big(L\otimes A\big)^{\otimes n} \ar[r]^-{\cong} ~&~
L^{\otimes n}\otimes A^{\otimes n}\ar[rr]^-{\ell_n\otimes\mu_n }~&&~L\otimes A
}\quad,
\end{flalign}
\end{subequations}
for all $n\geq 2$, where the unlabeled isomorphism is given by permuting
via the symmetric braiding on graded vector spaces
the tensor factors to the displayed form, without altering the order among the 
individual $L$'s.
\end{propo}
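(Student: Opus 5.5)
We check the claim directly: the maps $\ell^\otimes_n$ have the right degrees and symmetry, and they satisfy the homotopy Jacobi identities \eqref{eqn:Linftyalgebra}.

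\emph{Degrees and antisymmetry.} The differential $\dd$ has degree $1$ and each $\mu_n$ has degree $0$. Hence $\ell_1^\otimes$ has degree $1$ and $\ell_n^\otimes$ has degree $2-n$. For graded antisymmetry, note that $\gamma_\sigma$ on $(L\otimes A)^{\otimes n}$ corresponds, under the reordering isomorphism, to $\gamma_\sigma\otimes\gamma_\sigma$ on $L^{\otimes n}\otimes A^{\otimes n}$. The Koszul signs match because both sides come from the same symmetric monoidal coherence. Since $\ell_n$ is antisymmetric and $\mu_n$ is symmetric (by graded commutativity of $A$), the composite $\ell_n\otimes\mu_n$ is antisymmetric.

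\emph{Strategy for the Jacobi identities.} The cleanest route is the coalgebra picture. An $L_\infty$-structure on $L$ is the same thing as a degree $1$ coderivation $D$ on the cofree cocommutative coalgebra $\mathrm{Sym}^c(L[1])$ with $D^2=0$. After décalage, the identities \eqref{eqn:Linftyalgebra} are exactly the components of $D^2=0$. Alternatively, one can avoid coalgebras and work directly with \eqref{eqn:Linftyalgebra}, which is the route I would follow.

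\emph{Direct verification of the identities.} Fix $n$ and insert $\ell^\otimes$ into \eqref{eqn:Linftyalgebra}.

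\begin{itemize}
\item \emph{Terms with $k,l\geq 2$.} Here
\[
\ell^\otimes_k\circ(\ell^\otimes_l\otimes\id)\circ\gamma_\sigma = (\pm)\,\big(\ell_k\circ(\ell_l\otimes\id)\circ\gamma_\sigma\big)\otimes\big(\mu_k\circ(\mu_l\otimes\id)\circ\gamma_\sigma\big).
\]
By associativity and commutativity, $\mu_k\circ(\mu_l\otimes\id)\circ\gamma_\sigma=\mu_n$, independently of $\sigma$. The claim to verify is that the sign $(\pm)$ is the same for every $\sigma$. This uses the fact that $\mu$ has degree $0$, so moving $A$-factors past $\ell_l$ costs nothing beyond the Koszul signs already absorbed in the reordering isomorphism.
\item \emph{Terms with $k=1$ or $l=1$.} When $\ell^\otimes_1=\ell_1\otimes\id+\id\otimes\dd$ appears, the $\ell_1\otimes\id$ parts combine with the previous terms into $\big(\text{Jacobi}_n\text{ for }L\big)\otimes\mu_n$, which vanishes.
\item \emph{The $\dd$ parts.} There are two kinds: $\dd$ applied after $\ell_n\otimes\mu_n$ (from $k=1$), and $\dd$ applied to one input before $\ell_n\otimes\mu_n$ (from $l=1$, summed over shuffles, i.e.\ over which input). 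By the Leibniz rule, $\dd\circ\mu_n=\sum_i\mu_n\circ(\id\otimes\cdots\otimes\dd\otimes\cdots\otimes\id)$. These two kinds therefore cancel, provided the signs are correct. The relevant signs are the factor $(-1)^{k-1}$ and the Koszul sign $(-1)^{|\ell_n|}=(-1)^{n}$ picked up when $\dd$ passes the $L$-factors of degree $2-n$.
\item \emph{Leftover terms.} Cross terms $\ell_1\otimes\dd$ do not arise, since each $\ell_1^\otimes$ appears linearly. For $n=1$ the identity reduces to $(\ell_1\otimes\id+\id\otimes\dd)^2=\ell_1^2\otimes\id+\id\otimes\dd^2+(\ell_1\otimes\dd+(-1)\,\ell_1\otimes\dd)=0$. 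The sign $(-1)$ here is the Koszul sign in $(\id\otimes\dd)(\ell_1\otimes\id)=-\ell_1\otimes\dd$.
\end{itemize}

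\emph{Main obstacle.} The only real difficulty is sign bookkeeping: checking that the Leibniz-rule cancellation of the $\dd$ terms holds with the precise sign convention of \eqref{eqn:Linftyalgebra}. I would handle this by proving the statement once in the décalage convention, where the brackets are symmetric of degree $1$, the signs are uniform, and everything reduces to the single identity $D^2=0$. That identity follows formally because $\mathrm{Sym}^c((L\otimes A)[1])$ receives the coderivation induced by $D_L\otimes\mu+\id\otimes\dd$. I would then transport the result back through décalage, which is a known bijection between the two conventions.
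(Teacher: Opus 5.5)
Your proposal is correct. Note that the paper gives no proof of this proposition: it recalls it as a standard construction and defers to the literature. So your direct verification of \eqref{eqn:Linftyalgebra} is a self-contained alternative, not a reconstruction of an argument in the paper.

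The signs you single out are the right ones. Write $\tau_n:(L\otimes A)^{\otimes n}\to L^{\otimes n}\otimes A^{\otimes n}$ for the reordering isomorphism.
\begin{itemize}
\item For $k,l\geq 2$, your $(\pm)$ is in fact $+1$, i.e.\ $\ell^\otimes_k\circ(\ell^\otimes_l\otimes\id^{\otimes(k-1)})\circ\gamma_\sigma=\big[\big(\ell_k\circ(\ell_l\otimes\id^{\otimes(k-1)})\circ\gamma_\sigma\big)\otimes\mu_n\big]\circ\tau_n$. This holds because $\ell_l(x_1,\dots,x_l)$ is never moved and the two partial reorderings compose to $\tau_n$.
\item The $\dd$-part of the $(k,l)=(1,n)$ term contributes $(-1)^n\,[\ell_n\otimes(\dd\circ\mu_n)]\circ\tau_n$.
\item In the $(n,1)$ terms, the factor $(-1)^{\vert\sigma\vert}$ cancels against the antisymmetry of $\ell_n$. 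For the shuffle bringing the $i$-th input to the front, the Koszul signs turn $\mu_n\circ(\dd\otimes\id^{\otimes(n-1)})\circ\gamma_\sigma$ into exactly $\mu_n\circ(\id^{\otimes(i-1)}\otimes\dd\otimes\id^{\otimes(n-i)})$. Leibniz then gives $(-1)^{n-1}\,[\ell_n\otimes(\dd\circ\mu_n)]\circ\tau_n$, which cancels the $(1,n)$ contribution as you predict.
\end{itemize}
The paper's citation buys brevity. Your route buys an argument carried out directly in the paper's own sign convention $(-1)^{k-1}$ from \eqref{eqn:Linftyalgebra}.

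The one overstatement is in your final paragraph. $D^2=0$ for the coderivation on $\mathrm{Sym}^c((L\otimes A)[1])$ is not a formal consequence of $D_L^2=0$ and $\dd^2=0$. It needs the same shuffle-and-Leibniz computation, only with uniform signs. A genuinely formal proof does exist: $L\otimes A$ is an algebra over the Hadamard product $L_\infty\otimes_H\mathrm{Com}\cong L_\infty$.

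Also, d\'ecalage produces one particular antisymmetric convention, for instance Lada--Markl's sign $(-1)^{l(k-1)}$, which differs from the paper's $(-1)^{k-1}$. This is harmless: the two are related by the arity-wise rescaling $\ell_m\mapsto(-1)^{(m-1)(m-2)/2}\ell_m$, which fixes $\ell_1$ and commutes with $\ell_m\mapsto\ell_m\otimes\mu_m$. Since your direct route already works in the paper's convention, you can simply drop the d\'ecalage detour.
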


\begin{rem}\label{rem:tensorproduct}
With a slight abuse of notation, we will sometimes denote the $L_\infty$-algebra structure
$\ell^\otimes= (\ell^\otimes_n)_{n\geq 1}$ on $L\otimes A$ from Proposition \ref{prop:tensorproduct} simply by
\begin{flalign}
\ell^\otimes_1\,=\, \ell_1 + \dd\quad,\qquad
\ell^\otimes_n \,=\, \ell_n\quad,
\end{flalign}
for all $n\geq 2$.
\end{rem}

\begin{rem}\label{rem:Linftyalgebra}
We shall use $L_\infty$-algebras for the following
two purposes: 1.)~As an algebraic model for certain geometric objects
called \textit{formal moduli problems} \cite{Lurie,Pridham}, and 
2.)~as a model for the higher Lie algebra underlying the higher structure Lie group
of a higher gauge theory. Let us provide some comments and clarifications about these two contexts.
\sk

In the context of formal moduli problems, a key role
is played by the \textit{Maurer-Cartan elements} in the $L_\infty$-algebra 
$(L,\ell)$, which are elements $\alpha\in L^1$ of degree $1$ 
satisfying the Maurer-Cartan equation
\begin{flalign}\label{eqn:MCequation}
\sum_{n\geq 1}\tfrac{1}{n!}\,\ell_n(\alpha^{\otimes n})\,=\,0\quad.
\end{flalign}
One interprets such Maurer-Cartan elements as the points of 
the formal moduli problem associated with $(L,\ell)$.\footnote{For a rigorous 
construction of the formal moduli problem, one has to parametrize over 
Artinian dg-algebras, resulting in a functor of points $\mathbf{dgArt}^{\leq 0}\to \mathbf{sSet}$
which assigns to $A\in\mathbf{dgArt}^{\leq 0}$ the space of Maurer-Cartan elements
of the tensor product $L_\infty$-algebra $(L,\ell)\otimes \mathsf{m}(A)$, where $\mathsf{m}(A)$ denotes the maximal
ideal of $A$. While this parametrization over $\mathbf{dgArt}^{\leq 0}$ is crucial from a technical point of view, 
e.g.\ it truncates \eqref{eqn:MCequation} and \eqref{eqn:MCgauge} to finite sums, the underlying ideas of Maurer-Cartan theory
are more easily conveyed by ignoring this parametrization.} The degree $\leq 0$ components of the
$L_\infty$-algebra $(L,\ell)$ encode a tower of (higher) gauge transformations 
acting on such Maurer-Cartan elements. Concretely, an element $c_{(0)}\in L^0$ 
of degree $0$ acts as a gauge transformation on Maurer-Cartan elements $\alpha \in L^1$ 
according to
\begin{subequations}\label{eqn:MCgauge}
\begin{flalign}
\delta_{c_{(0)}}\alpha \,= \, \sum_{n\geq 0}\tfrac{1}{n!}\,\ell_{n+1}\big(\alpha^{\otimes n}\otimes c_{(0)}\big)\quad,
\end{flalign}
while an element $c_{(-1)}\in L^{-1}$ of degree $-1$ acts as a $2$-gauge transformation on $c_{(0)}\in L^0$ via
\begin{flalign}
\delta_{c_{(-1)}}c_{(0)} \,=\,\sum_{n\geq 0}\tfrac{1}{n!}\,\ell_{n+1}\big(\alpha^{\otimes n}\otimes c_{(-1)}\big) \quad.
\end{flalign}
In general, an element $c_{(1-k)}\in L^{1-k}$ of degree $1-k$, with $k\geq 2$, 
acts as a $k$-gauge transformation on the $(k{-}1)$-gauge transformation
$c_{(2-k)}\in L^{2-k}$ according to
\begin{flalign}
\delta_{c_{(1-k)}}c_{(2-k)}\,=\, \sum_{n\geq 0}\tfrac{1}{n!}\,\ell_{n+1}\big(\alpha^{\otimes n}\otimes c_{(1-k)}\big) \quad.
\end{flalign}
\end{subequations}
We refer the reader to \cite[Section 4]{JRSW} for a hands-on derivation of these formulas.
\sk

In the context of higher structure Lie algebras for higher gauge theories,
we denote the relevant $L_\infty$-algebras by symbols like $(\g,\ell^\g)$
and assume that they are of finite total dimension $\dim(\g) := \sum_{i\in\bbZ} \dim(\g^i) <\infty$.
For a higher gauge theory modeled on a Lie $N$-group, the corresponding
higher Lie algebra will be an $L_\infty$-algebra concentrated in degrees $\{1-N,\dots,-1,0\}$, i.e.\ 
its underlying cochain complex takes the form
\begin{flalign}\label{eqn:LieNalgebra}
\g\,=\, \Big(\xymatrix{
\g^{1-N} \ar[r]^-{\ell_1^\g}~&~\cdots \ar[r]^-{\ell_1^\g}~&~ \g^{-1} \ar[r]^-{\ell_1^\g} ~&~ \g^0
}\Big)\quad.
\end{flalign}
We call such objects \textit{finite-dimensional $N$-term $L_\infty$-algebras}.
By a simple degree counting argument, one observes that in this case 
the $L_\infty$-brackets $\ell^\g_n=0$ must necessarily vanish for all $n> N+1$. 
The formal moduli problem of flat $(\g,\ell^\g)$-connections
on a manifold $M$ is then modeled by the $L_\infty$-algebra
$(\g,\ell^\g) \otimes \Omega^{\bullet}(M)$ which is given by the tensor product 
from Proposition \ref{prop:tensorproduct} of the structure $L_\infty$-algebra $(\g,\ell^\g)$
with the de Rham (commutative dg-)algebra $\Omega^{\bullet}(M)$ on $M$.
Our topological-holomorphic higher Chern-Simons theories introduced
in Section \ref{sec:higherCS} will be defined by a similar tensor product 
construction where, in order to obtain their characteristic topological-holomorphic features, 
the de Rham algebra $\Omega^{\bullet}(M)$ will be substituted by the de Rham-Dolbeault
algebra $\Omega^\bullet(M)\widehat{\otimes} \Omega^{0,\bullet}(C)$ of a product
manifold $X := M\times C$ with $C=\CP$ the Riemann sphere. 
\end{rem}

The concept of an $\ad$-invariant symmetric non-degenerate pairing $\h\otimes \h \to \bbC$
on an ordinary Lie algebra $\h$ admits a vast generalization
in the context of $L_\infty$-algebras $(L,\ell)$, where such maps are allowed 
to carry a non-trivial cohomological degree $k\in\bbZ$. The geometric interpretation of such data
is that of $(k+2)$-shifted symplectic structures on the associated formal moduli problem,
and their relevance in physics is that they encode in the case of 
$k=-3$ an action functional and the BV antibracket for the theory modeled by $(L,\ell)$.
\begin{defi}\label{def:cyclic}
A \textit{cyclic structure of degree $k\in\bbZ$} on an $L_\infty$-algebra
$(L,\ell)$ is a non-degenerate and graded symmetric linear map
$\pair{\cdot}{\cdot}:L^{\otimes 2}\to \bbC$ of degree $k\in\bbZ$
which is invariant under the $L_\infty$-structure in the sense that
\begin{flalign}\label{eqn:cyclicproperties}
\xymatrix{
\pair{\cdot}{\cdot}\circ \big(\ell_n\otimes\id\big)\,:\,L^{\otimes(n+1)} \ar[r] 
~&~\bbC
}
\end{flalign}
is graded antisymmetric, for all $n\geq 1$.
\end{defi}

Let us note the following interplay between cyclic structures 
and the tensor product construction from Proposition \ref{prop:tensorproduct}.
\begin{propo}\label{prop:cyclictensor}
Let $(L,\ell)$ be an $L_\infty$-algebra
endowed with a cyclic structure $\pair{\cdot}{\cdot}: L^{\otimes 2}\to \bbC$ 
of degree $k\in\bbZ$ and let $A$ be a commutative dg-algebra with differential denoted by $\dd:A\to A$
and multiplication by $\mu :A^{\otimes 2} \to A$. Then the degree $k$ linear map
\begin{flalign}\label{eqn:cyclictensordef}
\xymatrix@C=4em{
\pair{\cdot}{\cdot}^{\otimes} \,:\,\big(L\otimes A\big)^{\otimes 2} \ar[r]^-{\id\otimes\gamma\otimes\id}
~&~L\otimes L\otimes A\otimes A \ar[r]^-{\pair{\cdot}{\cdot}\otimes\mu}~&~\bbC\otimes A\,\cong\,A
}
\end{flalign}
is graded symmetric and equivariant under the tensor product $L_\infty$-structure $\ell^{\otimes}$ from 
Proposition \ref{prop:tensorproduct} in the sense that
\begin{subequations}\label{eqn:cyclictensorproperties}
\begin{flalign}\label{eqn:cyclictensorproperties1}
\pair{\cdot}{\cdot}^{\otimes}\circ\big(\ell_1^\otimes\otimes  \id + \id\otimes\ell_1^\otimes\big) \,=\,(-1)^k~\dd \circ
\pair{\cdot}{\cdot}^{\otimes}
\end{flalign}
and 
\begin{flalign}\label{eqn:cyclictensorproperties2}
\xymatrix{
\pair{\cdot}{\cdot}^{\otimes} \circ \big(\ell^\otimes_n\otimes\id\big)\,:\,
\big(L\otimes A\big)^{\otimes(n+1)}\ar[r]~&~A
}
\end{flalign}
\end{subequations}
is graded antisymmetric, for all $n\geq 2$.
\end{propo}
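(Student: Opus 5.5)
The proof is a direct computation on homogeneous elementary tensors $x\otimes a, y\otimes b,\dots$ with $x,y\in L$ and $a,b\in A$, keeping careful track of Koszul signs. By linearity, it suffices to verify each identity on such elements. First write out the pairing explicitly: applying $\id\otimes\gamma\otimes\id$ to $(x\otimes a)\otimes(y\otimes b)$ produces the sign $(-1)^{\vert a\vert\vert y\vert}$, so
\begin{flalign*}
\pair{x\otimes a}{y\otimes b}^\otimes \,=\, (-1)^{\vert a\vert\vert y\vert}\,\pair{x}{y}\,ab\quad.
\end{flalign*}
This already shows that the map has degree $k$, since $\pair{\cdot}{\cdot}$ has degree $k$ and $\mu$ has degree $0$.

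Next, for graded symmetry, compare with $(-1)^{(\vert x\vert+\vert a\vert)(\vert y\vert+\vert b\vert)}\pair{y\otimes b}{x\otimes a}^\otimes$. Use the graded symmetry $\pair{y}{x}=(-1)^{\vert x\vert\vert y\vert}\pair{x}{y}$ and the graded commutativity $ba=(-1)^{\vert a\vert\vert b\vert}ab$. The remaining sign check uses that $\pair{x}{y}\neq 0$ only if $\vert x\vert+\vert y\vert=-k$, and it reduces to an exponent identity mod $2$. That identity is routine.

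For \eqref{eqn:cyclictensorproperties1}, split $\ell_1^\otimes=\ell_1\otimes\id+\id\otimes\dd$ and handle the two pieces separately.
\begin{itemize}
\item The $\ell_1$-terms combine to $\pair{\ell_1x}{y}+(-1)^{\vert x\vert}\pair{x}{\ell_1 y}$, times $ab$ with appropriate signs. This vanishes by the cyclicity of Definition \ref{def:cyclic} for $n=1$: graded antisymmetry of $\pair{\ell_1(\cdot)}{\cdot}$ combined with graded symmetry of the pairing.
\item The $\dd$-terms give $\pair{x}{y}$ times $\big((-1)^{\vert x\vert}(\dd a)b+(-1)^{\vert x\vert+\vert a\vert+\vert y\vert}a\,\dd b\big)$ up to an overall sign. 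Using $\vert x\vert+\vert y\vert=-k$ and the Leibniz rule, this equals $(-1)^k\dd$ applied to the pairing.
\end{itemize}
Here the sign $(-1)^k$ arises precisely because $\dd$ must pass the degree-$k$ scalar $\pair{x}{y}$.

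For \eqref{eqn:cyclictensorproperties2} with $n\geq 2$, the composite on $(x_1\otimes a_1)\otimes\cdots\otimes(x_{n+1}\otimes a_{n+1})$ equals $\pm\pair{\ell_n(x_1,\dots,x_n)}{x_{n+1}}\,a_1\cdots a_{n+1}$. The sign is exactly the Koszul sign of shuffling all the $a_i$ past the $x_j$ to the right. The cleanest approach is to observe that this composite factors as the shuffle isomorphism $(L\otimes A)^{\otimes(n+1)}\cong L^{\otimes(n+1)}\otimes A^{\otimes(n+1)}$ followed by $\big(\pair{\cdot}{\cdot}\circ(\ell_n\otimes\id)\big)\otimes\mu_{n+1}$. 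The shuffle isomorphism is compatible with the symmetric group actions via the symmetric braiding, and the first factor is graded antisymmetric by Definition \ref{def:cyclic}. The multiplication $\mu_{n+1}$ is graded symmetric because $A$ is graded commutative. The tensor product of a graded antisymmetric map with a graded symmetric map, precomposed with the braiding-compatible shuffle, is graded antisymmetric. So it suffices to check the case of adjacent transpositions, which is a one-line sign check. The main obstacle is only sign bookkeeping, and expressing everything through the symmetric monoidal braiding avoids explicit exponent tracking.
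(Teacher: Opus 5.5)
Your approach is the paper's. You evaluate on homogeneous elementary tensors, kill the $\ell_1$-terms with cyclicity for $n=1$, and combine the $\dd$-terms via Leibniz and $\vert x\vert+\vert y\vert=-k$. For $n\geq 2$ the paper only says the claim is a direct consequence of the definitions. Your factorization through the shuffle isomorphism followed by $\big(\pair{\cdot}{\cdot}\circ(\ell_n\otimes\id)\big)\otimes\mu_{n+1}$ is a correct way of making that explicit. You also check graded symmetry, which the paper leaves implicit. That identity holds by the same braiding argument and does not actually need $\vert x\vert+\vert y\vert=-k$.

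There is a sign slip in your $\dd$-terms. In $\pair{x\otimes\dd a}{y\otimes b}^\otimes$ the braiding moves $\dd a$, which has degree $\vert a\vert+1$, past $y$. This produces $(-1)^{(\vert a\vert+1)\vert y\vert}$ rather than $(-1)^{\vert a\vert\vert y\vert}$. The correct contribution is
\[
(-1)^{\vert x\vert}\pair{x\otimes\dd a}{y\otimes b}^\otimes+(-1)^{\vert x\vert+\vert a\vert+\vert y\vert}\pair{x\otimes a}{y\otimes\dd b}^\otimes
=(-1)^{\vert a\vert\vert y\vert}\,(-1)^{\vert x\vert+\vert y\vert}\,\pair{x}{y}\,\big((\dd a)\,b+(-1)^{\vert a\vert}\,a\,\dd b\big)\quad.
\]
Your bracket has relative sign $(-1)^{\vert a\vert+\vert y\vert}$ between its two terms instead of $(-1)^{\vert a\vert}$. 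For odd $\vert y\vert$ it is therefore not a multiple of $\dd(ab)$, and the Leibniz step fails as you wrote it. With the corrected sign, $(-1)^{\vert x\vert+\vert y\vert}=(-1)^k$ gives exactly $(-1)^k\,\dd\pair{x\otimes a}{y\otimes b}^\otimes$, which is the paper's computation. So this is a bookkeeping error rather than a gap in the argument.
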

\begin{proof}
The second property \eqref{eqn:cyclictensorproperties2} is a direct consequence
of the definitions of $\pair{\cdot}{\cdot}^{\otimes}$ and $\ell^\otimes$.
To verify the first property \eqref{eqn:cyclictensorproperties1}, it suffices to 
evaluate on elements $(x\otimes a)\otimes (y\otimes b)\in (L\otimes A)^{\otimes 2}$ 
with all factors of homogeneous degree. From the definitions of $\pair{\cdot}{\cdot}^{\otimes}$ and $\ell^\otimes$
one directly computes
\begin{flalign}
\nn & \pair{\cdot}{\cdot}^{\otimes}\circ\big(\ell_1^\otimes\otimes  \id + \id\otimes\ell_1^\otimes\big)
\big((x\otimes a)\otimes (y\otimes b)\big)\\
\nn&\qquad\quad \,=\, (-1)^{\vert a\vert\,\vert y\vert}\,\Big(
\pair{\ell_1 x}{y}\,ab + (-1)^{\vert x\vert}\,\pair{x}{\ell_1 y}\,ab
+ (-1)^{\vert x\vert+\vert y\vert}\,\pair{x}{y}\,\dd (a b)\Big)\\
&\qquad\quad\,=\,(-1)^k\,\dd\circ \pair{\cdot}{\cdot}^{\otimes}\big((x\otimes a)\otimes (y\otimes b)\big)\quad,
\end{flalign}
where in the last step we used that the first two terms cancel as a consequence
of the properties \eqref{eqn:cyclicproperties} of a cyclic structure
and that $\pair{x}{y}$ is of degree $k+\vert x\vert+\vert y\vert$.
\end{proof}

\begin{rem}\label{rem:cyclictensor}
In analogy to Remark \ref{rem:tensorproduct},
we will sometimes denote, with a slight abuse of notation, 
the degree $k$ linear map $\pair{\cdot}{\cdot}^{\otimes}$
from Proposition \ref{prop:cyclictensor} simply by $\pair{\cdot}{\cdot}$.
\end{rem}

%%%%%%%%%%%%%%%%%%%%%%%%%%%%%%%%%%%%%%%%%%%%%%%%
%%%%%%%%%%%%%%%%%%%%%%%%%%%%%%%%%%%%%%%%%%%%%%%%

\section{\label{sec:higherCS}Topological-holomorphic higher Chern-Simons theories}
In this section we present a family of $L_\infty$-algebras 
which describes higher-dimensional and higher-categorical 
variants of the topological-holomorphic Chern-Simons theory
of Costello and Yamazaki \cite{CY3}. Our family of 
theories is labeled in particular by a positive integer
$d\in\bbZ_{\geq 1}$, which describes both the categorical depth
of the higher topological-holomorphic Chern-Simons theory and the number of spatial dimensions
of its associated $(d+1)$-dimensional integrable field theories.
The case of $d=1$ corresponds to the usual $4$-dimensional
topological-holomorphic Chern-Simons theory
of Costello and Yamazaki \cite{CY3}.
\sk

For a fixed positive integer $d\in\bbZ_{\geq 1}$, the type of manifolds
on which we define our topological-holomorphic higher Chern-Simons theories
are products
\begin{flalign}
X\,:=\,M\times C\,:=\,M\times \CP
\end{flalign}
of a $(d+1)$-dimensional manifold $M$ and the Riemann 
sphere $C=\CP$.\footnote{All constructions and results
of this paper can be generalized to the case of higher-genus 
Riemann surfaces $C$, but this requires some additional ingredients 
and techniques which render the formalism heavier. The details for 
the case of ordinary $4d$ topological-holomorphic Chern-Simons theory 
are spelled out explicitly in \cite[Section 4]{BSV2}, and the same methods
clearly apply to higher dimensions too.}
We denote by
\begin{flalign}\label{eqn:dRDol}
\Omega^{\bullet,(0,\bullet)}(X)\,:=\,\Omega^\bullet(M)\,\widehat{\otimes}\, \Omega^{0,\bullet}(C)
\end{flalign}
the commutative dg-algebra which is given by the (completed projective) tensor product
of the de Rham algebra $\Omega^\bullet(M)$ on $M$ and the Dolbeault algebra $\Omega^{0,\bullet}(C)$
on $C$ (with respect to their standard Fr{\'e}chet topologies). Note that, as a $\bbZ$-graded vector space,
the de Rham-Dolbeault algebra can be decomposed as a direct sum
\begin{flalign}\label{eqn:dRDolsplit}
\Omega^{\bullet,(0,\bullet)}(X)\,=\,\Omega^{\bullet,(0,0)}(X) \oplus \Omega^{\bullet-1,(0,1)}(X)\quad,
\end{flalign}
where $\Omega^{\bullet,(0,0)}(X)$ contains the forms on $X$ whose legs are only along $M$, while
$\Omega^{\bullet-1,(0,1)}(X)$ contains the forms on $X$ with precisely $1$ leg along $C$ 
and all others along $M$.
\begin{defi}\label{def:CSauxiliary}
Let $(\g,\ell^\g)$ be a finite-dimensional $d$-term $L_\infty$-algebra, see Remark \ref{rem:Linftyalgebra}.
The $L_\infty$-algebra of \textit{topological-holomorphic
$(\g,\ell^\g)$-Chern-Simons theory without singularities and boundary conditions} 
on the $(d+3)$-dimensional manifold $X=M\times C$ is defined as 
the tensor product
\begin{flalign}
\big(\E(X),\ell\big)\,:=\, (\g,\ell^\g)\otimes \Omega^{\bullet,(0,\bullet)}(X)
\end{flalign}
from Proposition \ref{prop:tensorproduct} of the structure $L_\infty$-algebra $(\g,\ell^\g)$ with the 
de Rham-Dolbeault commutative dg-algebra \eqref{eqn:dRDol}.
\end{defi}

\begin{rem}\label{rem:CSauxiliary}
To obtain an understanding of the kind of objects described by
the $L_\infty$-algebra $\big(\E(X),\ell\big)$ from Definition \ref{def:CSauxiliary},
let us spell out explicitly its Maurer-Cartan elements \eqref{eqn:MCequation} and (higher) 
gauge transformations \eqref{eqn:MCgauge}. For this it is convenient to make use of the direct
sum decomposition \eqref{eqn:dRDolsplit} in order to decompose the underlying
$\bbZ$-graded vector space of this $L_\infty$-algebra according to
\begin{flalign}
\E(X)\,=\,\E(X)_{\|}^{}\oplus \E(X)_{\perp}^{}\,:=\,\Big(\g\otimes \Omega^{\bullet,(0,0)}(X)\Big) \oplus
\Big(\g\otimes \Omega^{\bullet-1,(0,1)}(X)\Big)\quad.
\end{flalign}
Spelling this out in more detail, this means that the components
\begin{flalign}
\E(X)_{\|}^j\,=\,\bigoplus_{k=0}^{d+1} \Big(\g^{j-k}\otimes \Omega^{k,(0,0)}(X)\Big)\quad,
\qquad
\E(X)_{\perp}^j\,=\,\bigoplus_{k=0}^{d+1} \Big(\g^{j-k-1}\otimes \Omega^{k,(0,1)}(X)\Big)\quad,
\end{flalign} 
for $j\in\bbZ$, are described by forms of various degrees on $X$ taking values in the appropriate components of $\g$.
The characteristic feature of $\E(X)_{\|}$ is that it consists of forms whose legs are only along $M$,
while $\E(X)_{\perp}$ consists of forms with precisely $1$ leg along $C$ and all others along $M$. 
A Maurer-Cartan element in $\big(\E(X),\ell\big)$ is then given by a pair of degree $1$ 
elements $A\oplus \xi\in \E(X)_{\|}^1\oplus \E(X)_{\perp}^1$ satisfying the differential equations
\begin{subequations}
\begin{flalign}
\dd_M A + \sum_{n\geq 1}\tfrac{1}{n!}\,\ell_n^{\g}\big(A^{\otimes n}\big)\,&=\,0\quad,\\
\delbar A + \dd_M\xi + \sum_{n\geq 0}\tfrac{1}{n!}\, \ell_{n+1}^\g\big(A^{\otimes n}\otimes \xi\big)\,&=\,0\quad,
\end{flalign}
\end{subequations}
where $\dd_M$ is the de Rham differential on $M$, $\delbar$ is the Dolbeault differential on $C$
and $\ell_n^\g$ denotes the extension of the $L_\infty$-brackets on $\g$
to $\g$-valued forms (see also Remark \ref{rem:tensorproduct}). 
Observe that the first equation expresses flatness 
of the higher connection $A$ along $M$, while the second equation expresses that $A$ is holomorphic
along $C$, up to a violation controlled by $\xi$.
\sk

Gauge and higher gauge transformations take in this example the following form:
A $1$-gauge transformation is given by a pair of degree $0$ elements $\epsilon_{(0)}\oplus\chi_{(0)}
\in \E(X)_{\|}^0\oplus \E(X)_{\perp}^0$ which acts on Maurer-Cartan elements 
$A\oplus\xi\in \E(X)^{1}_{\|} \oplus \E(X)^{1}_{\perp}$ according to
\begin{subequations}
\begin{flalign}
\delta_{\epsilon_{(0)}\oplus \chi_{(0)}}A \,&=\, \dd_M \epsilon_{(0)} + 
\sum_{n\geq 0}\tfrac{1}{n!}\,\ell_{n+1}^\g\big(A^{\otimes n}\otimes \epsilon_{(0)}\big)\quad,\\
\delta_{\epsilon_{(0)}\oplus \chi_{(0)}}\xi\,&=\, 
\delbar \epsilon_{(0)} + \dd_M \chi_{(0)} + 
\sum_{n\geq 0}\tfrac{1}{n!}\,\ell_{n+1}^\g\big(A^{\otimes n}\otimes \chi_{(0)}\big)
+ \sum_{n\geq 0}\tfrac{1}{n!}\,\ell_{n+2}^\g\big(A^{\otimes n}\otimes \xi\otimes \epsilon_{(0)}\big)\quad.
\end{flalign}
\end{subequations}
For $2\leq k\leq d$, a $k$-gauge transformation is given by a degree 
$1-k$ element $\epsilon_{(1-k)}\oplus \chi_{(1-k)}\in 
\E(X)^{1-k}_{\|} \oplus \E(X)^{1-k}_{\perp}$ and 
it acts on $(k-1)$-gauge transformations $\epsilon_{(2-k)}\oplus \chi_{(2-k)}\in 
\E(X)^{2-k}_{\|} \oplus \E(X)^{2-k}_{\perp}$ according to
\begin{subequations}
\begin{flalign}
\delta_{\epsilon_{(1-k)}\oplus \chi_{(1-k)}}\epsilon_{(2-k)} \,&=\, 
\dd_M \epsilon_{(1-k)} + \sum_{n\geq 0}\tfrac{1}{n!}\,\ell_{n+1}^\g\big(A^{\otimes n}\otimes \epsilon_{(1-k)}\big)\quad,\\
\nn \delta_{\epsilon_{(1-k)}\oplus \chi_{(1-k)}}\chi_{(2-k)}\,&=\, \delbar \epsilon_{(1-k)} 
+ \dd_M \chi_{(1-k)} + \sum_{n\geq 0}\tfrac{1}{n!}\,\ell_{n+1}^\g\big(A^{\otimes n}\otimes \chi_{(1-k)}\big)\\
&\qquad \qquad+ \sum_{n\geq 0}\tfrac{1}{n!}\,\ell_{n+2}^\g\big(A^{\otimes n}\otimes \xi \otimes \epsilon_{(1-k)}\big)\quad.
\end{flalign}
\end{subequations}
Since $\E(X)^j=0$, for all $j \leq -d$, all $(k\geq d+1)$-gauge transformations are trivial in this example.
\end{rem}

\begin{rem}\label{rem:CSauxiliary2}
It is important to highlight that in Definition \ref{def:CSauxiliary} 
the positive integer $d\in\bbZ_{\geq 1}$ enters in two a priori
completely unrelated ways: 1.)~It fixes the dimension of the $(d+1)$-dimensional
manifold $M$ and 2.)~it determines the categorical depth of the structure $d$-term $L_\infty$-algebra $(\g,\ell^\g)$.
While this link between dimension and categorical depth is inessential for the observations 
in Remark \ref{rem:CSauxiliary}, it becomes crucial when discussing
degree $-3$ cyclic structures on such types of $L_\infty$-algebras, which 
provide in particular action functionals and BV antibrackets for such theories.
Even though a precise construction of such degree $-3$ cyclic structures becomes only possible
after one has implemented suitable singularities and boundary 
conditions\footnote{This is completely analogous to the observation that 
the action functional $S_\omega(\A)= \int_X \omega\wedge\mathrm{CS}(\A)$
of Costello and Yamazaki \cite{CY3} becomes non-degenerate and gauge invariant
only once the fields $\A$ are subject to suitable singularities and boundary conditions, see e.g.\ \cite{BSV}
and also \cite{BSV2} for further details.}
at the zeros and poles of a chosen meromorphic $1$-form $\omega$ on $C$,
see Section \ref{sec:higherIFT} for the details, we can already explain now
the origin of this link. Any cyclic structure of degree $k\in\bbZ$
on the structure $d$-term $L_\infty$-algebra $(\g,\ell^\g)$ is by Definition
\ref{def:cyclic} non-degenerate, hence it determines
in particular an isomorphism $\g\stackrel{\cong}{\to} \g^\ast[k]$
between the underlying graded vector space $\g$ and its $k$-shifted
dual $\g^\ast[k]$. Since $\g$ is concentrated in degrees $\{1-d,\dots,0\}$
and $\g^\ast[k]$ in degrees $\{-k,1-k,\dots,d-1-k\}$, one observes that,
under the assumption that the extremal components $\g^0\neq 0$
and $\g^{1-d}\neq 0$ are non-trivial, such an isomorphism may only exist for $k=d-1$.
Given now any cyclic structure $\pair{\cdot}{\cdot}^\g:\g^{\otimes 2}\to \bbC$
of degree $d-1$, we obtain from the construction in Proposition \ref{prop:cyclictensor}
a degree $d-1$ linear map
\begin{flalign}
\pair{\cdot}{\cdot}^\g \,:\, \E(X)^{\otimes 2}~\longrightarrow~\Omega^{\bullet,(0,\bullet)}(X)
\end{flalign}
describing a kind of `densitized' cyclic structure on $\E(X)$.
Upon implementing suitable singularities and boundary conditions
as in Section \ref{sec:higherIFT}, this map takes 
values in $\Omega^{\bullet,(0,\bullet)}\big(X,L_{(\omega)}\big)$,
where $L_{(\omega)}\to C$ denotes the holomorphic line bundle associated
with the divisor of the meromorphic $1$-form $\omega$.
Restricting to compactly supported sections and composing with the degree $-(d+2)$
integration map $\int_X \omega\wedge(\,\cdot\,) : \Omega_\cc^{\bullet,(0,\bullet)}\big(X,L_{(\omega)}\big)\to \bbC$ 
over the $(d+3)$-dimensional product manifold $X = M\times C$ then defines a cyclic structure
of degree $-3$.
\end{rem}

%%%%%%%%%%%%%%%%%%%%%%%%%%%%%%%%%%%%%%%%%%%%%%%%
%%%%%%%%%%%%%%%%%%%%%%%%%%%%%%%%%%%%%%%%%%%%%%%%

\section{\label{sec:higherLax}Higher Lax connections}
In this section we demonstrate how higher Lax connections 
arise naturally in our framework by allowing the fields 
described by the $L_\infty$-algebra $\big(\E(X),\ell\big)$ 
from Definition \ref{def:CSauxiliary} to develop `suitable'
singularities, which we formalize by non-negative divisors on $C=\CP$.
Such singularities are allowed to occur only at the zeros
of a fixed meromorphic $1$-form $\omega$ on $C$ and they 
are constrained by algebraic properties involving
the structure $L_\infty$-algebra $(\g,\ell^\g)$ and
a choice of cyclic structure $\pair{\cdot}{\cdot}^\g : 
\g^{\otimes 2}\to \bbC$ of degree $d-1$.
This will guarantee in particular that the fields with singularities form an $L_\infty$-algebra.
Throughout this section we make arbitrary
but fixed choices for the data $\pair{\cdot}{\cdot}^\g$ and $\omega$,
and we write
\begin{flalign}\label{eqn:omegasplit}
(\omega)\,=\,(\omega)_0 + (\omega)_{\infty}
\end{flalign}
for the decomposition of the divisor $(\omega): C\to \bbZ$
of $\omega$ into its non-negative part $(\omega)_0\geq 0$ (describing
the zeros of $\omega$) and its non-positive part $(\omega)_\infty\leq 0$ 
(describing the poles of $\omega$).
\sk

To provide a precise definition of what we mean
by `suitable' singularities, we require access to the individual
form components along the $(d+1)$-dimensional manifold $M$.
This is most easily achieved by assuming that $M$ is parallelizable
and choosing any basis for the $C^\infty(M)$-module $\Omega^1(M)$ of $1$-forms. While there
might exist a more general framework based on local trivializations,
we will not explore this here in order to avoid additional technicalities.
Therefore, we make the following assumption throughout the rest of this paper.
\begin{assu}\label{assu:parallelizable}
The $(d+1)$-dimensional manifold $M$ is parallelizable.
\end{assu}

Fixing any $C^\infty(M)$-module basis
\begin{flalign}\label{eqn:formbasis}
\big\{e^1,e^2,\dots,e^{d+1}\in\Omega^1(M)\big\}\quad,
\end{flalign}
we can define for any $(d+1)$-tuple $\und{a}:=(a_1,\dots,a_{d+1})\in \{0,1\}^{d+1}\subseteq \bbN^{d+1}$
a differential form
\begin{flalign}
e^{\und{a}}\,:=\, (e^1)^{a_1}\wedge (e^2)^{a_2}\wedge\cdots\wedge (e^{d+1})^{a_{d+1}}\,\in\,\Omega^{\bullet}(M)
\end{flalign}
of degree $\vert \und{a}\vert := \sum_{k=1}^{d+1}a_k$, where by definition $(e^k)^0 := 1$ is the constant
function and $(e^k)^1:= e^k$.
This defines a direct sum decomposition
\begin{flalign}\label{eqn:formdecomposition}
\Omega^\bullet(M)\,=\,\!\!\bigoplus_{\und{a}\in \{0,1\}^{d+1}}\!\!\Omega^{\und{a}}(M)
\end{flalign}
of the $C^\infty(M)$-module $\Omega^\bullet(M)$ into rank $1$ free modules
$\Omega^{\und{a}}(M):= C^\infty(M)\,e^{\und{a}}$.
\sk

Note that this direct sum decomposition interplays nicely with the 
$\wedge$-product and the de Rham differential of forms.
Given any $\und{a},\und{b}\in \{0,1\}^{d+1}$ such that 
$\und{a} + \und{b} \in \{0,1\}^{d+1}$, the $\wedge$-product restricts to a map
\begin{flalign}
\wedge\,:\, \Omega^{\und{a}}(M)\otimes \Omega^{\und{b}}(M)~\longrightarrow~
\Omega^{\und{a}+\und{b}}(M)\quad,
\end{flalign}
and it is trivial in the case where $\und{a} + \und{b} \not\in \{0,1\}^{d+1}$ since $e^k\wedge e^k=0$.
Given any $\und{a}\in \{0,1\}^{d+1}$ and introducing its subset of successors
\begin{flalign}\label{eqn:successor}
\mathsf{suc}(\und{a})\,:=\,\Big\{\und{b}\in \{0,1\}^{d+1}~~\Big\vert~~\und{b}-\und{a}\in \{0,1\}^{d+1}~\text{and}~
\vert \und{b}-\und{a}\vert =1\Big\}\,\subseteq\,\{0,1\}^{d+1}\quad,
\end{flalign}
the de Rham differential restricts to a map
\begin{flalign}
\dd_M\,:\,  \Omega^{\und{a}}(M)~\longrightarrow~\!\!\bigoplus_{\und{b}\in \mathsf{suc}(\und{a})}\!\!\Omega^{\und{b}}(M)\quad.
\end{flalign}
\begin{defi}\label{def:singularity}
Let $(\g,\ell^\g)$ be a finite-dimensional $d$-term $L_\infty$-algebra
endowed with a cyclic structure $\pair{\cdot}{\cdot}^\g$ of degree $d-1$,
and let $\omega$ be a meromorphic $1$-form on $C$.
A \textit{singularity structure} for the topological-holomorphic $(\g,\ell^\g)$-Chern-Simons
theory from Definition \ref{def:CSauxiliary} consists of a family of non-negative divisors on $C$
\begin{flalign}
\mathscr{D}\,:=\,\Big\{ D^i_{\und{a}}\geq 0~~\Big\vert~~ i\in\{1-d,\dots,0\}~,~~\und{a}\in\{0,1\}^{d+1}\Big\}\quad,
\end{flalign}
which satisfies the following properties:
\begin{itemize}
\item[(i)] Given any $i\in\{1-d,\dots,0\}$ and $\und{a}\in \{0,1\}^{d+1}$, then
\begin{flalign}
D_{\und{a}}^i\,\leq \,D_{\und{b}}^i\quad,
\end{flalign}
for all successors $\und{b}\in \mathsf{suc}(\und{a})$ from \eqref{eqn:successor}.

\item[(ii)] Given any $n\geq 1$ and $i_1,\dots,i_n\in \{1-d,\dots,0\}$ such that
$\ell^{\mathfrak{g}}_n : \bigotimes_{k=1}^n\g^{i_k}\to \g^{2-n + \sum_{k=1}^n i_k}$ is not the zero map,
then
\begin{flalign}
\sum_{k=1}^n D_{\und{a}_k}^{i_k} \,\leq \, D^{2-n + \sum_{k=1}^n i_k}_{\sum_{k=1}^n\und{a}_k}\quad,
\end{flalign}
for all $\und{a}_1,\dots,\und{a}_n\in\{0,1\}^{d+1}$ with $\sum_{k=1}^n\und{a}_k \in\{0,1\}^{d+1}$.

\item[(iii)] Given any $i\in \{1-d,\dots,0\}$ and $\und{a}\in\{0,1\}^{d+1}$, then
\begin{flalign}
D^i_{\und{a}} + D^{1-d-i}_{\und{1}-\und{a}}\,=\,(\omega)_0
\end{flalign}
yields the non-negative part $(\omega)_0$ of the divisor of $\omega$,
where $\und{1}:= (1,\dots,1)\in \{0,1\}^{d+1}$ denotes the tuple consisting only of $1$s.
\end{itemize}
\end{defi}
 
Given any singularity structure $\mathscr{D}$ as 
in Definition \ref{def:singularity}, one can implement its prescribed 
singularities in the topological-holomorphic $(\g,\ell^\g)$-Chern-Simons
theory $\big(\E(X),\ell\big)$ from Definition \ref{def:CSauxiliary} by the following construction:
Making use of \eqref{eqn:formdecomposition}, we obtain a direct sum decomposition
\begin{flalign}
\E(X)\,=\, \!\!\bigoplus_{i\in \{1-d,\dots,0\}}~\bigoplus_{\und{a}\in\{0,1\}^{d+1}} \g^{i}\otimes \Big(\Omega^{\und{a}}(M)\,\widehat{\otimes}\,\Omega^{0,\bullet}(C)\Big)\quad.
\end{flalign}
We then define
\begin{flalign}\label{eqn:singularitytwisted}
\E_\mathscr{D}^{}(X)\,:=\, \!\!\bigoplus_{i\in \{1-d,\dots,0\}}~\bigoplus_{\und{a}\in\{0,1\}^{d+1}} 
\g^{i}\otimes \Big(\Omega^{\und{a}}(M)\,\widehat{\otimes}\,\Omega^{0,\bullet}\big(C,L_{D^i_{\und{a}}}\big)\Big)
\end{flalign}
by twisting the component-wise Dolbeault complexes with the holomorphic line bundles
$L_{D^i_{\und{a}}}\to C$ associated with the corresponding members of the family of divisors 
$\mathscr{D} = \{D^i_{\und{a}}\}$. Note that $\E_\mathscr{D}^{}(X)$ inherits the structure of 
a $\bbZ$-graded vector space. Using that $L_{D}\otimes L_{D^\prime}\cong L_{D+D^\prime}$,
for any two divisors $D,D^\prime$, and employing the holomorphic line bundle morphisms
$L_D\to L_{D^\prime}$ associated with partially ordered divisors $D\leq D^\prime$, 
one immediately recognizes that the properties (i) and (ii) of Definition \ref{def:singularity}
are designed precisely in such a way that $\E_\mathscr{D}^{}(X)$ can be endowed with the de Rham differential
$\dd_M$ and the extension from Remark \ref{rem:tensorproduct} of the
$L_\infty$-structure $\ell^\g$. (Property (iii) of Definition \ref{def:singularity} is not relevant
at the present point, but it will play an important role later when we discuss cyclic structures.)
This leads us to the following definition.
\begin{defi}\label{def:CSsingularity}
Let $(\g,\ell^\g)$ be a finite-dimensional $d$-term $L_\infty$-algebra
endowed with a cyclic structure $\pair{\cdot}{\cdot}^\g$ of degree $d-1$,
and let $\omega$ be a meromorphic $1$-form on $C$. Choose any 
singularity structure $\mathscr{D}$ as in Definition \ref{def:singularity}.
The $L_\infty$-algebra $\big(\E_{\mathscr{D}}^{}(X),\ell\big)$ 
of \textit{topological-holomorphic $(\g,\ell^\g)$-Chern-Simons theory with singularities
$\mathscr{D}$ but no boundary conditions} on the $(d+3)$-dimensional manifold $X=M\times C$ 
is defined by the $\bbZ$-graded vector space \eqref{eqn:singularitytwisted} and 
the $L_\infty$-structure given by $\ell_1 := \dd_M + \delbar + \ell_1^\g$
and $\ell_n := \ell_n ^\g$, for all $n\geq 2$.
\end{defi}

To identify the theory of higher Lax connections associated
with the $L_\infty$-algebra $\big(\E_{\mathscr{D}}^{}(X),\ell\big)$, we use the homological
perturbation techniques from \cite{BSV2}. For this we observe that, since the divisors $D^i_{\und{a}}\geq 0$
are by definition all non-negative, all divisor-twisted Dolbeault complexes entering \eqref{eqn:singularitytwisted} 
have trivial first cohomology $\mathsf{H}^1\Omega^{0,\bullet}\big(C,L_{D^i_{\und{a}}}\big)\cong 0$.
Their zeroth cohomology $\mathsf{H}^0\Omega^{0,\bullet}\big(C,L_{D^i_{\und{a}}}\big)\cong \O_{D^i_{\und{a}}}(C)$
is given by the (finite-dimensional) vector spaces of divisor-conditioned meromorphic functions on $C$.
Choosing now any family of continuous strong deformation retracts (e.g.\ from Hodge theory)
\begin{equation}
\begin{tikzcd}
\big(\O_{D^i_{\und{a}}}(C),0\big) \ar[rr,shift right=-1ex,"i^i_{\und{a}}"] && \ar[ll,shift right=-1ex,"p^i_{\und{a}}"] 
\big(\Omega^{0,\bullet}(C,L_{D^i_{\und{a}}}),\delbar \big) \ar[loop,out=-20,in=18,distance=30,swap,"h^i_{\und{a}}"]
\end{tikzcd}\quad,
\end{equation} 
for all $i\in\{1-d,\dots,0\}$ and $\und{a}\in\{0,1\}^{d+1}$, we obtain via the 
totalization construction in \cite[Appendix A.1]{BSV2} a strong deformation retract
\begin{equation}\label{eqn:defretsingular}
\begin{tikzcd}[column sep=large]
\big(\L(M),\ell_1^\prime\big) \ar[rr,shift right=-1ex,"i"] && \ar[ll,shift right=-1ex,"\widetilde{p}\,:=\,p + p(\dd_M+\ell_1^\g)h"] 
\big(\E_\mathscr{D}^{}(X),\ell_1 \big) \ar[loop,out=-25,in=25,distance=30,swap,"h"]
\end{tikzcd}
\end{equation} 
from the underlying cochain complex of the $L_\infty$-algebra from Definition \ref{def:CSsingularity}
to the cochain complex defined by 
\begin{subequations}\label{eqn:singularM}
\begin{flalign}
\L(M)\,=\,
\!\!\bigoplus_{i\in \{1-d,\dots,0\}}~\bigoplus_{\und{a}\in\{0,1\}^{d+1}} 
\g^{i}\otimes \Omega^{\und{a}}(M) \otimes\O_{D^i_{\und{a}}}(C)\quad,
\end{flalign}
endowed with the $\bbZ$-graded vector space structure inherited from $\g$ and $\Omega^\bullet(M)$, 
and differential 
\begin{flalign}
\ell_1^\prime \,:=\, \dd_M +\ell_1^\g
\end{flalign}
\end{subequations}
given by the sum of the de Rham differential and the one on the structure $L_\infty$-algebra. 
Property (ii) of Definition \ref{def:singularity} further implies that the higher $L_\infty$-brackets
$\ell^\g_n$, for $n\geq 2$, extend in the sense of Remark \ref{rem:tensorproduct} to 
$\big(\L(M),\ell_1^\prime\big)$, and thereby define an $L_\infty$-algebra
$\big(\L(M),\ell^\prime\big)$. 
The cochain map $i$ from \eqref{eqn:defretsingular} is by construction a quasi-isomorphism
and it is furthermore compatible with the $L_\infty$-structures.
The following result then follows immediately.
\begin{propo}\label{prop:higherLax}
The cochain map $i$ from \eqref{eqn:defretsingular} defines a (strict) morphism
of $L_\infty$-algebras
\begin{flalign}
\xymatrix{
i\,:\,\big(\L(M),\ell^\prime\big) \ar[r]^-{\sim}~&~\big(\E_\mathscr{D}^{}(X),\ell\big)
}
\end{flalign}
which is further a weak equivalence between the $L_\infty$-algebra 
$\big(\E_\mathscr{D}^{}(X),\ell\big)$ from Definition \ref{def:CSsingularity} and
the $L_\infty$-algebra $\big(\L(M),\ell^\prime\big)$ defined
by the cochain complex \eqref{eqn:singularM} and the $L_\infty$-brackets $\ell^\prime_n = \ell^\g_n$, for all $n\geq 2$.
\end{propo}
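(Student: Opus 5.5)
Two things have to be checked: that $i$ intertwines all the brackets, which makes it a strict $L_\infty$-morphism, and that it is a quasi-isomorphism, which makes it a weak equivalence.

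I would start by describing $i$ concretely. By the totalization construction of \cite[Appendix A.1]{BSV2}, applied to the family of retracts $(i^i_{\und{a}},p^i_{\und{a}},h^i_{\und{a}})$, the map $i$ is the componentwise inclusion
\begin{flalign}
\g^i\otimes\Omega^{\und{a}}(M)\otimes\O_{D^i_{\und{a}}}(C)~\longrightarrow~\g^i\otimes\big(\Omega^{\und{a}}(M)\,\widehat{\otimes}\,\Omega^{0,\bullet}(C,L_{D^i_{\und{a}}})\big)\quad,
\end{flalign}
which sends a divisor-conditioned meromorphic function to the corresponding holomorphic section, placed in Dolbeault degree $0$. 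The perturbation $\dd_M+\ell_1^\g$ does not change $i$. The reason is that in the perturbation lemma the corrected inclusion is $i+h(\dd_M+\ell_1^\g)i+\cdots$, and $h$ vanishes on Dolbeault degree $0$: it lowers Dolbeault degree and the complexes are concentrated in degrees $0$ and $1$. The perturbed differential on $\L(M)$ is $p(\dd_M+\ell_1^\g)i+\cdots$. Here $\dd_M$ and $\ell_1^\g$ act only on the $\Omega^\bullet(M)$ and $\g$ factors, and they map holomorphic sections to holomorphic sections, using the inclusions $L_D\to L_{D'}$ for $D\le D'$ from property (i) and from (ii) with $n=1$. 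Hence all higher correction terms vanish and the differential is exactly $\ell_1'=\dd_M+\ell_1^\g$. I would simply cite this totalization result rather than redo it.

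Next I would check the chain map property, $\ell_1\circ i=i\circ\ell_1'$. Since $\delbar$ annihilates holomorphic sections, $\ell_1\circ i=(\dd_M+\ell_1^\g)\circ i=i\circ\ell_1'$. For $n\ge2$, property (ii) of Definition \ref{def:singularity} guarantees that both $\ell_n'$ and $\ell_n$ are defined by the same multiplication of sections $L_{D_1}\otimes\cdots\otimes L_{D_n}\cong L_{\sum D_k}\to L_{D}$ together with $\ell_n^\g$ and the wedge product. The only nontrivial observation is that a product of holomorphic sections is again holomorphic and stays in Dolbeault degree $0$. This gives $\ell_n(i(x_1),\dots,i(x_n))=i(\ell_n'(x_1,\dots,x_n))$ on the nose. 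Since a strict morphism has no higher components, it is then an $L_\infty$-morphism.

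Finally, $i$ is a quasi-isomorphism because it is part of the strong deformation retract \eqref{eqn:defretsingular}, and a weak equivalence of $L_\infty$-algebras is by definition an $L_\infty$-morphism whose linear part is a quasi-isomorphism. The only step needing care is the perturbation-lemma bookkeeping, namely showing that the homotopy $h$ makes no corrections to $i$ or to $\ell_1'$. I expect this to follow from the two-step length of the Dolbeault complexes and the degree-$0$ placement of the image of $i$. I would also note in passing that graded antisymmetry and the homotopy Jacobi identities on $\L(M)$ are inherited: $i$ is injective and intertwines all the brackets, so these identities pull back from $(\E_{\mathscr{D}}(X),\ell)$.
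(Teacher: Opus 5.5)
Your proposal is correct and follows the same route as the paper, which simply observes that $i$ is a quasi-isomorphism by construction (being part of the strong deformation retract \eqref{eqn:defretsingular}) and strictly intertwines the brackets. You go further by making explicit what the paper leaves implicit: $h$ vanishes on Dolbeault degree $0$, so neither $i$ nor $\ell_1'=\dd_M+\ell_1^\g$ receives perturbative corrections, and $i$ has to be taken as the canonical inclusion of holomorphic sections for the bracket compatibility to hold.
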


\begin{rem}\label{rem:higherLax}
We announced above that the $L_\infty$-algebra $\big(\L(M),\ell^\prime\big)$
describes higher Lax connections, so let us now substantiate this claim.
Using the inclusions $\O_{D^{i}_{\und{a}}}(C)\subseteq \M(C)$ of the divisor-conditioned
meromorphic functions into the algebra of all meromorphic functions on $C$, 
we can identify Maurer-Cartan elements in the $L_\infty$-algebra $\big(\L(M),\ell^\prime\big)$ 
with degree $1$ elements $A\in \big(\g\otimes \big(\Omega^\bullet(M)\widehat{\otimes}\M(C)\big)\big)^1$ 
that satisfy the flatness condition for $(\g,\ell^\g)$-connections
\begin{flalign}
\dd_M A + \sum_{n\geq 1}\tfrac{1}{n!}\,\ell_n^\g\big(A^{\otimes n}\big)\,=\,0
\end{flalign}
along $M$ and are meromorphic along $C$, with poles controlled by the singularity structure $\mathscr{D}$.
These are precisely the desired features of a higher Lax connection for the structure
$d$-term $L_\infty$-algebra $(\g,\ell^\g)$. The gauge and higher gauge
transformations of such higher Lax connections are easily described by specializing 
their general description from Remark \ref{rem:Linftyalgebra}. Explicitly,
\begin{subequations}
\begin{flalign}
\delta_{\epsilon_{(0)}}A\,&=\,\dd_M\epsilon_{(0)} + 
\sum_{n\geq 0}\tfrac{1}{n!}\,\ell_{n+1}^\g\big(A^{\otimes n}\otimes \epsilon_{(0)}\big)\quad,\\
\delta_{\epsilon_{(1-k)}}\epsilon_{(2-k)}\,&=\,\dd_M \epsilon_{(1-k)} + 
\sum_{n\geq 0}\tfrac{1}{n!}\,\ell_{n+1}^\g\big(A^{\otimes n}\otimes \epsilon_{(1-k)}\big)\quad,
\end{flalign}
\end{subequations}
for all Maurer-Cartan elements $A\in \L(M)^1$ 
and all $\epsilon_{(0)}\in \L(M)^0$, $\epsilon_{(2-k)}\in \L(M)^{2-k}$ and
$\epsilon_{(1-k)}\in \L(M)^{1-k}$, with $k\geq 2$. Note that from the 
definition of $\L(M)$ in \eqref{eqn:singularM} it follows that also
the $k$-gauge transformations $\epsilon_{(1-k)}\in \L(M)^{1-k}$, for $k\geq 1$, 
are meromorphic along $C$, with poles controlled by the singularity structure $\mathscr{D}$.
\sk

It is worthwhile to highlight that there exists an interesting
notion of higher parallel transport for higher flat connections \cite{AS},
taking the form of a morphism of differential graded coalgebras
\begin{flalign}
\mathrm{hol}_A^{\infty}\,:\,C_\bullet(M) ~\longrightarrow~\hat{\mathsf{B}}\big(\hat{\mathbb{U}}(\g)\widehat{\otimes}\M(C)\big)
\quad,
\end{flalign}
for every Maurer-Cartan element $A\in \L(M)^1$.
The domain of this morphism is the differential graded coalgebra of singular chains in $M$ and its
codomain is given by the completed bar construction $\hat{\mathsf{B}}$
of the completed tensor product of the completed 
universal enveloping algebra $\hat{\mathbb{U}}(\g)$ of the structure $L_\infty$-algebra
$(\g,\ell^\g)$ with the algebra $\M(C)$ of meromorphic functions on $C$.
In a language more tailored to integrable field theorists, this means that one can assign
to every higher Lax connection $A\in \L(M)^1$ a family of conserved charges taking values 
in the completed universal enveloping algebra $\hat{\mathbb{U}}(\g)$, which depend meromorphically on $C$
and are localized on non-trivial $k$-cycles $Z_k(M)\subseteq C_k(M)$ in spacetime $M$. This covers a wide range of
shapes such as non-contractible spheres $\mathbb{S}^k\subseteq M$ or tori $\mathbb{T}^k\subseteq M$
of any dimension $k\leq \dim(M) = d+1$.
\end{rem}

%%%%%%%%%%%%%%%%%%%%%%%%%%%%%%%%%%%%%%%%%%%%%%%%
%%%%%%%%%%%%%%%%%%%%%%%%%%%%%%%%%%%%%%%%%%%%%%%%

\section{\label{sec:higherIFT}\texorpdfstring{$(d+1)$}{(d+1)}-dimensional integrable field theories}
In this section we demonstrate how integrable field theories on
the $(d+1)$-dimensional manifold $M$ arise naturally in our framework 
by imposing `suitable' boundary conditions on the topological-holomorphic $(\g,\ell^\g)$-Chern-Simons theory 
$\big(\E_{\mathscr{D}}^{}(X),\ell\big)$ with prescribed singularities $\mathscr{D}$
from Definition \ref{def:CSsingularity}, 
which we formalize by non-positive divisors on $C=\CP$.
Such boundary conditions are imposed only at the poles
of the fixed meromorphic $1$-form $\omega$ on $C$ and they 
are constrained, in complete analogy to the case of singularity structures
from Definition \ref{def:singularity}, 
by algebraic properties involving the structure $L_\infty$-algebra $(\g,\ell^\g)$ and
the chosen cyclic structure $\pair{\cdot}{\cdot}^\g : 
\g^{\otimes 2}\to \bbC$ of degree $d-1$. This will guarantee that
the boundary conditioned fields form an $L_\infty$-algebra, which furthermore
carries a natural cyclic structure of degree $-3$. 
In order to enforce that all components of the higher Lax connection
are fully determined by the fields of the integrable field theory,
our definition of boundary conditions below includes a fourth property
which links them to the chosen singularity structure $\mathscr{D}$ 
(and, in general, also the genus of $C$) and implies that
the divisor-twisted Dolbeault cohomologies containing the degrees of
freedom of the higher Lax connection are trivial.
\begin{defi}\label{def:bdycondition}
Let $\mathscr{D} = \{D^i_{\und{a}}\}$ be a singularity structure in the sense of Definition \ref{def:singularity}.
A \textit{local boundary condition} for the topological-holomorphic $(\g,\ell^\g)$-Chern-Simons
theory with singularities $\mathscr{D}$ from Definition \ref{def:CSsingularity} 
consists of a family of non-positive divisors on $C$
\begin{flalign}
\mathscr{B}\,:=\,\Big\{ B^i_{\und{a}}\leq 0~~\Big\vert~~ i\in\{1-d,\dots,0\}~,~~\und{a}\in\{0,1\}^{d+1}\Big\}\quad,
\end{flalign}
which satisfies the following properties:
\begin{itemize}
\item[(i)] Given any $i\in\{1-d,\dots,0\}$ and $\und{a}\in \{0,1\}^{d+1}$, then
\begin{flalign}
B_{\und{a}}^i\,\leq \,B_{\und{b}}^i\quad,
\end{flalign}
for all successors $\und{b}\in \mathsf{suc}(\und{a})$ from \eqref{eqn:successor}.

\item[(ii)] Given any $n\geq 1$ and $i_1,\dots,i_n\in \{1-d,\dots,0\}$ such that
$\ell^{\mathfrak{g}}_n : \bigotimes_{k=1}^n\g^{i_k}\to \g^{2-n + \sum_{k=1}^n i_k}$ is not the zero map,
then
\begin{flalign}\label{eqn:bdypropii}
\sum_{k=1}^n B_{\und{a}_k}^{i_k} \,\leq \, B^{2-n + \sum_{k=1}^n i_k}_{\sum_{k=1}^n\und{a}_k}\quad,
\end{flalign}
for all $\und{a}_1,\dots,\und{a}_n\in\{0,1\}^{d+1}$ with $\sum_{k=1}^n\und{a}_k \in\{0,1\}^{d+1}$.

\item[(iii)] Given any $i\in \{1-d,\dots,0\}$ and $\und{a}\in\{0,1\}^{d+1}$, then
\begin{flalign}
B^i_{\und{a}} + B^{1-d-i}_{\und{1}-\und{a}}\,=\,(\omega)_\infty
\end{flalign}
yields the non-positive part $(\omega)_\infty$ of the divisor of $\omega$,
where $\und{1}:= (1,\dots,1)\in \{0,1\}^{d+1}$ denotes the tuple consisting of only $1$s.

\item[(iv)] Given $i\in  \{1-d,\dots,0\}$ and $\und{a}\in\{0,1\}^{d+1}$ such that
$i + \vert \und{a} \vert =1$, then
\begin{flalign}\label{physical fields cohomology vanishing condition}
\deg\big(D^i_{\und{a}} + B^{i}_{\und{a}}\big)\,=\, g-1 \,=\,-1\quad, 
\end{flalign}
where $g$ denotes the genus of $C$, i.e.\ $g=0$ in our case of $C=\CP$.
\end{itemize}
\end{defi}

Given any local boundary condition $\mathscr{B}$ as in Definition \ref{def:bdycondition},
one can enforce it on the topological-holomorphic $(\g,\ell^\g)$-Chern-Simons
theory $\big(\E_{\mathscr{D}}(X),\ell\big)$ with singularities $\mathscr{D}$ 
from Definition \ref{def:CSsingularity} by the following construction:
Starting from the decomposition in \eqref{eqn:singularitytwisted}, we define
\begin{flalign}\label{eqn:bdytwisted}
\E_{\mathscr{D}+\mathscr{B}}^{}(X)\,:=\, \!\!\bigoplus_{i\in \{1-d,\dots,0\}}~\bigoplus_{\und{a}\in\{0,1\}^{d+1}} 
\g^{i}\otimes \Big(\Omega^{\und{a}}(M)\,\widehat{\otimes}\,\Omega^{0,\bullet}\big(C,L_{D^i_{\und{a}}+B^i_{\und{a}}}\big)\Big)
\end{flalign}
by twisting the component-wise Dolbeault complexes not only with the holomorphic line
bundles from the singularity structure $\mathscr{D}$ but also with those
from the boundary condition $\mathscr{B}$. (Recall that $L_{D^i_{\und{a}}}\otimes L_{B^i_{\und{a}}}\cong
L_{D^i_{\und{a}}+B^i_{\und{a}}}$.) Note that $\E_{\mathscr{D}+\mathscr{B}}^{}(X)$ inherits the structure 
of a $\bbZ$-graded vector space and that the properties (i) and (ii) in Definition \ref{def:bdycondition}
are designed precisely in such a way that $\E_{\mathscr{D}+\mathscr{B}}^{}(X)$ 
can be endowed with the de Rham differential $\dd_M$ and the extension from Remark \ref{rem:tensorproduct} of the
$L_\infty$-structure $\ell^\g$. (Properties (iii) and (iv) in Definition \ref{def:bdycondition} are 
not relevant at the present point, but they will play an important role later when we discuss cyclic structures
and extract a $(d+1)$-dimensional integrable field theory.)
This leads us to the following definition.
\begin{defi}\label{def:CSbdy}
Let $(\g,\ell^\g)$ be a finite-dimensional $d$-term $L_\infty$-algebra
endowed with a cyclic structure $\pair{\cdot}{\cdot}^\g$ of degree $d-1$,
and let $\omega$ be a meromorphic $1$-form on $C$. Choose any 
singularity structure $\mathscr{D}$ as in Definition \ref{def:singularity}
and local boundary condition $\mathscr{B}$ as in Definition \ref{def:bdycondition}.
The $L_\infty$-algebra $\big(\E_{\mathscr{D}+\mathscr{B}}^{}(X),\ell\big)$ 
of \textit{topological-holomorphic $(\g,\ell^\g)$-Chern-Simons theory with singularities
$\mathscr{D}$ and boundary conditions $\mathscr{B}$} on the $(d+3)$-dimensional manifold $X=M\times C$ 
is defined by the $\bbZ$-graded vector space \eqref{eqn:bdytwisted} and 
the $L_\infty$-structure given by $\ell_1 := \dd_M + \delbar + \ell_1^\g$
and $\ell_n := \ell_n ^\g$, for all $n\geq 2$.
\end{defi}

Let us now construct a cyclic structure of degree $-3$ on the compactly supported sections
of the $L_\infty$-algebra $\big(\E_{\mathscr{D}+\mathscr{B}}^{}(X),\ell\big)$ from Definition \ref{def:CSbdy},
which in particular defines an action functional for this theory.
Let us start with observing that, as a consequence
of the properties listed in Definitions \ref{def:singularity} and \ref{def:bdycondition},
the cyclic structure $\pair{\cdot}{\cdot}^\g : \g^{\otimes 2}\to \bbC$
of degree $d-1$ on the structure $L_\infty$-algebra $(\g,\ell^\g)$ induces
via the construction in Proposition \ref{prop:cyclictensor} (recall also Remark \ref{rem:cyclictensor}) 
a degree $d-1$ linear map
\begin{flalign}\label{eqn:densitizedcyclic}
\xymatrix@C=3em{
\pair{\cdot}{\cdot}^\g\,:\,\E_{\mathscr{D}+\mathscr{B}}(X)^{\otimes 2} \ar[r]
~&~\Omega^{\bullet}(M)\,\widehat{\otimes}\,\Omega^{0,\bullet}\big(C,L_{(\omega)}\big)
}
\end{flalign}
satisfying the $L_\infty$-equivariance properties stated in this proposition.
The reason why this map takes values in the holomorphic line bundle $L_{(\omega)}\to C$
associated with the divisor of $\omega$ is as follows: Since 
$\pair{\cdot}{\cdot}^\g : \g^{\otimes 2}\to \bbC$
has degree $d-1$, it can only pair non-trivially among the 
complementary components $\g^i$ and $\g^{1-d-i}$, for all $i\in\{1-d,\dots,0\}$.
Given any $\und{a},\und{b}\in\{0,1\}^{d+1}$ with $\und{a}+ \und{b} \in 
\{0,1\}^{d+1}$, such that the $\wedge$-product between $\Omega^{\und{a}}(M)$
and $\Omega^{\und{b}}(M)$ does not necessarily vanish, we compute
\begin{flalign}
D^{i}_{\und{a}} + B^{i}_{\und{a}} + D^{1-d-i}_{\und{b}} + B^{1-d-i}_{\und{b}}\,\leq\,
D^{i}_{\und{a}} + B^{i}_{\und{a}} + D^{1-d-i}_{\und{1}-\und{a}} + B^{1-d-i}_{\und{1}-\und{a}}\,=\,
(\omega)_{0} + (\omega)_{\infty} \,=\,(\omega)\quad,
\end{flalign}
where the first step uses property (i) and the second step uses property (iii)
of Definitions \ref{def:singularity} and \ref{def:bdycondition}. 
These partial orderings of divisors induce holomorphic line bundle 
morphisms $L_{D^{i}_{\und{a}} + B^{i}_{\und{a}}}\otimes
L_{D^{1-d-i}_{\und{b}} + B^{1-d-i}_{\und{b}}} \cong 
L_{D^{i}_{\und{a}} + B^{i}_{\und{a}} + D^{1-d-i}_{\und{b}} + B^{1-d-i}_{\und{b}}}
\to L_{(\omega)}$, which allow us to define \eqref{eqn:densitizedcyclic}.
Post-composing this map with the $\wedge$-product of the given meromorphic
$1$-form $\omega$ defines a degree $d$ linear map
\begin{flalign}\label{eqn:densitizedcyclicomega}
\xymatrix@C=3em{
\omega\wedge \pair{\cdot}{\cdot}^\g\,:\,
\E_{\mathscr{D}+\mathscr{B}}(X)^{\otimes 2} \ar[r] ~&~\Omega^\bullet(X)
}
\end{flalign}
to the full de Rham algebra of $X = M\times C$.
\begin{propo}\label{prop:cyclicCS}
The linear map
\begin{flalign}\label{eqn:cyclicCS}
\xymatrix@C=2.5em{
\pair{\cdot}{\cdot}_{\omega}^{}\,:\,
\E_{\mathscr{D}+\mathscr{B},\cc}(X)^{\otimes 2}\ar[rr]^-{\omega\wedge \pair{\cdot}{\cdot}^\g}
~&&~\Omega_\cc^\bullet(X)\ar[r]^-{\int_X}~&~ \bbC
}
\end{flalign}
defined by restricting \eqref{eqn:densitizedcyclicomega} to compactly supported
sections and composing with the integration map on $X$ defines
a cyclic structure of degree $-3$ on the $L_\infty$-subalgebra 
$\big(\E_{\mathscr{D}+\mathscr{B},\cc}(X),\ell\big)\subseteq 
\big(\E_{\mathscr{D}+\mathscr{B}}(X),\ell\big)$ of compactly supported sections.
\end{propo}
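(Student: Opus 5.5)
The proof has three parts. First we check that the pairing is well defined, of degree $-3$ and graded symmetric. Then we check cyclicity with respect to $\ell_1$, which reduces to Stokes' theorem. Finally we check cyclicity with respect to the higher brackets and prove non-degeneracy.

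\textbf{Degree and symmetry.} By the discussion preceding the statement, $\omega\wedge\pair{\cdot}{\cdot}^\g$ is a well-defined map into $\Omega^\bullet(X)$. It is smooth: the pole of $\omega$ at each point $z\in C$ is compensated by the vanishing order $\geq -(\omega)(z)$ of the section of $L_{(\omega)}$, and likewise at zeros. Its degree is $d$, namely $d-1$ from $\pair{\cdot}{\cdot}^\g$ plus $1$ from $\omega$. Integration over the $(d+3)$-dimensional manifold $X$ has degree $-(d+3)$, so the total degree is $-3$. Graded symmetry follows from Proposition \ref{prop:cyclictensor}, since the Koszul sign from moving the odd $1$-form $\omega$ is the same on both orderings.

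\textbf{Cyclicity for $\ell_1=\dd_M+\delbar+\ell_1^\g$.} Equation \eqref{eqn:cyclictensorproperties1} in Proposition \ref{prop:cyclictensor} gives
\[
\pair{\cdot}{\cdot}^\g\circ(\ell_1\otimes\id+\id\otimes\ell_1)=\pm(\dd_M+\delbar)\circ\pair{\cdot}{\cdot}^\g
\]
as $L_{(\omega)}$-valued forms. Here one checks that the divisor-twisted $\delbar$ is compatible with the bundle maps $L_D\otimes L_{D'}\to L_{(\omega)}$; these are holomorphic, so it is. Since $\omega$ is meromorphic, $\partial\omega$ does not contribute, and on forms of type $(0,\bullet)$ along $C$ we have $\omega\wedge(\dd_M+\delbar)\beta=\pm\dd_X(\omega\wedge\beta)$.

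This identity is where I expect the main obstacle. Away from the zeros and poles of $\omega$ it is immediate. At the poles of $\omega$, however, $\delbar\omega$ is a distribution supported at the poles, and one must show that it pairs to zero against $\beta$. The reason is that $\beta$ is a smooth section of $L_{(\omega)}$, i.e.\ it vanishes to order $\geq -(\omega)(z)$, so the product $\omega\beta$ is a smooth form and the delta contributions cancel. I would therefore argue via the identification $\Omega^{0,\bullet}(C,L_{(\omega)})\xrightarrow{\omega\cdot}\Omega^{1,\bullet}(C)$: multiplication by $\omega$ is an isomorphism of sheaves onto smooth $(1,\bullet)$-forms that intertwines the twisted $\delbar$ with $\delbar$. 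Stokes' theorem on $X$ with compact supports then kills the total derivative, which gives graded antisymmetry of $\pair{\cdot}{\cdot}_\omega\circ(\ell_1\otimes\id)$.

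\textbf{Higher brackets and non-degeneracy.} For $n\geq2$, \eqref{eqn:cyclictensorproperties2} holds pointwise, and integration preserves graded antisymmetry. For non-degeneracy I would work pointwise over $X$ in a local holomorphic trivialization, pairing the component $\g^i\otimes\Omega^{\und a}(M)\otimes\Omega^{0,\bullet}(C,L_{D^i_{\und a}+B^i_{\und a}})$ with $\g^{1-d-i}\otimes\Omega^{\und 1-\und a}(M)\otimes\Omega^{0,1-\bullet}(C,L_{D^{1-d-i}_{\und1-\und a}+B^{1-d-i}_{\und1-\und a}})$. By property (iii) of both definitions, the tensor product of these two line bundles is exactly $L_{(\omega)}$, so the pairing is perfect fibrewise. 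Combined with non-degeneracy of $\pair{\cdot}{\cdot}^\g$ and of the wedge/integration pairing on compactly supported forms, this gives (weak) non-degeneracy: for any nonzero $\alpha$, choose a bump section $\beta$ in the complementary summand with $\pair{\alpha}{\beta}_\omega\neq0$.
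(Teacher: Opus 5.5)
Your proposal is correct and follows essentially the same route as the paper. The paper's proof likewise uses the degree count $d-(d+3)=-3$, pointwise non-degeneracy from property (iii) of both definitions, invariance under $\ell_n$ for $n\geq 2$ via Proposition \ref{prop:cyclictensor}, and invariance under $\ell_1$ via $\omega\wedge(\dd_M+\delbar)(\cdot)=-\dd_X(\omega\wedge(\cdot))$ together with Stokes' theorem. Your extra care at the poles of $\omega$ makes explicit a point the paper leaves implicit: multiplication by $\omega$ identifies $L_{(\omega)}$-valued $(0,\bullet)$-forms with smooth $(1,\bullet)$-forms, intertwining the twisted $\delbar$ with $\delbar$.
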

\begin{proof}
Since $X = M\times C$ is $(d+3)$-dimensional, the integration map $\int_X$ has degree $-(d+3)$,
hence \eqref{eqn:cyclicCS} has degree $d-(d+3)=-3$. Non-degeneracy follows from the
observation that the pairing entering the top-degree form integrand 
in \eqref{eqn:cyclicCS} is point-wise non-degenerate
as a consequence of property (iii) of Definitions \ref{def:singularity} 
and \ref{def:bdycondition}. Invariance of $\pair{\cdot}{\cdot}_{\omega}^{}$ under
the $L_\infty$-brackets $\ell_n$, for $n\geq 2$, follows directly from Proposition \ref{prop:cyclictensor},
while for invariance under the $\ell_1$-bracket one further uses that 
$\omega\wedge (\dd_M + \delbar)(\,\cdot\,) = -\dd_X\big(\omega\wedge(\,\cdot\,)\big)$, since
$\omega$ is a meromorphic $1$-form on $C$, and Stokes' theorem $\int_X\dd_X(\,\cdot\,)=0$.
\end{proof}

\begin{rem}\label{rem:cyclicCS}
Let us describe more explicitly the action functional 
which is defined by the cyclic structure from Proposition \ref{prop:cyclicCS}
on the topological-holomorphic $(\g,\ell^\g)$-Chern-Simons theory with singularities $\mathscr{D}$ and boundary 
conditions $\mathscr{B}$ from Definition \ref{def:CSbdy}.
This action is defined on degree $1$ elements $\mathcal{A} \in \E_{\mathscr{D}+\mathscr{B},\cc}(X)^1$
with compact support and it reads as
\begin{flalign}\label{eqn:CSaction}
\nn S_\omega(\mathcal{A})\,&:=\,\sum_{n\geq 1}\tfrac{1}{(n+1)!}\,\pair{\mathcal{A}}{\ell_n\big(\mathcal{A}^{\otimes n}\big)}_\omega^{}\\
\,&=\,\int_X\omega\wedge \Big(\tfrac{1}{2}\,\pair{\mathcal{A}}{(\dd_M+\delbar)\mathcal{A}}^\g + \sum_{n\geq 1}\tfrac{1}{(n+1)!}\, 
\pair{\mathcal{A}}{\ell_n^\g\big(\mathcal{A}^{\otimes n}\big)}^\g\Big)\quad.
\end{flalign}
This action clearly resembles the $4$-dimensional action of Costello and Yamazaki \cite{CY3}, 
and more importantly it provides a generalization to $(d+3)$-dimensional manifolds $X=M\times C$.
It is worthwhile to emphasize that $\mathcal{A} \in \E_{\mathscr{D}+\mathscr{B},\cc}(X)^1$ is a short-hand notation
for a family of differential forms \eqref{eqn:bdytwisted}, with form degrees reaching from $1$ to $d$.
Hence, the action \eqref{eqn:CSaction} describes a rich theory in which
these components couple among each other with interactions determined by 
the brackets $\ell_n^\g$ of the structure $L_\infty$-algebra $(\g,\ell^\g)$. 
\end{rem}

To identify the integrable field theory on the $(d+1)$-dimensional manifold
$M$ which is associated with the $L_\infty$-algebra $\big(\E_{\mathscr{D}+\mathscr{B}}^{}(X),\ell\big)$, 
we use the homological perturbation and homotopy transfer techniques from \cite{BSV2}.
As a consequence of the properties of singularity structures and boundary conditions
in Definitions \ref{def:singularity} and \ref{def:bdycondition}, 
and in particular property (iv) of the latter, the component-wise
divisor-twisted Dolbeault cohomologies in \eqref{eqn:bdytwisted} fall into the
following three cases:
\begin{itemize}
\item[(1)] Suppose that $i\in\{1-d,\dots,0\}$ and $\und{a}\in \{0,1\}^{d+1}$ are
such that $i + \vert\und{a}\vert \leq 0$. Writing this inequality as $\vert\und{a}\vert \leq -i \leq d-1$,
we observe that $\mathsf{suc}(\und{a}) \neq \varnothing$ since $M$ is $(d+1)$-dimensional. 
Picking any $\und{b}\in \mathsf{suc}(\und{a})$ and using property (i) of Definitions 
\ref{def:singularity} and \ref{def:bdycondition}, we obtain
the estimate $\deg\big(D^i_{\und{a}}+B^i_{\und{a}}\big)\leq \deg\big(D^i_{\und{b}}+B^i_{\und{b}}\big)$.
Since this step increases the total degree $i + \vert\und{b}\vert  = i + \vert\und{a}\vert +1$ by $1$, 
we can repeat it (if necessary) until the total degree reaches $1$. 
It then follows from property (iv) of Definition \ref{def:bdycondition} that 
\begin{flalign}
\deg\big(D^i_{\und{a}}+B^i_{\und{a}}\big)\,\leq\,-1\quad.
\end{flalign}
The divisor-twisted Dolbeault cohomologies in this case are thus given by
\begin{flalign}
\mathsf{H}^0\Omega^{0,\bullet}\big(C,L_{D^i_{\und{a}}+B^i_{\und{a}}}\big)\,\cong \,0\quad,\qquad
\mathsf{H}^1\Omega^{0,\bullet}\big(C,L_{D^i_{\und{a}}+B^i_{\und{a}}}\big)\,\cong \,\bbC^{N_{\und{a}}^i}\quad,
\end{flalign}
with $N_{\und{a}}^i := -1-\deg\big(D^i_{\und{a}}+B^i_{\und{a}}\big)\geq 0$.

\item[(2)] Suppose that $i\in\{1-d,\dots,0\}$ and $\und{a}\in \{0,1\}^{d+1}$ are
such that $i + \vert\und{a}\vert =1$. It then follows directly 
from property (iv) of Definition \ref{def:bdycondition} that the
divisor-twisted Dolbeault cohomologies vanish in this case
\begin{flalign}
\mathsf{H}^0\Omega^{0,\bullet}\big(C,L_{D^i_{\und{a}}+B^i_{\und{a}}}\big)\,\cong \,0\quad,\qquad
\mathsf{H}^1\Omega^{0,\bullet}\big(C,L_{D^i_{\und{a}}+B^i_{\und{a}}}\big)\,\cong \,0\quad.
\end{flalign}

\item[(3)] Suppose that $i\in\{1-d,\dots,0\}$ and $\und{a}\in \{0,1\}^{d+1}$
are such that $i + \vert\und{a}\vert \geq 2$. Then the complementary
labels $1-d-i\in \{1-d,\dots,0\}$ and $\und{1}-\und{a}\in \{0,1\}^{d+1}$ satisfy
\begin{flalign}
1-d-i + \vert \und{1} - \und{a}\vert \,=\,2 - (i+ \vert\und{a}\vert) \,\leq 0\,\quad,
\end{flalign}
where we used that $\vert\und{1}\vert = d+1$ since $M$ is $(d+1)$-dimensional.
From the argument in (1) we obtain the estimate
$\deg\big(D^{1-d-i}_{\und{1}-\und{a}}+B^{1-d-i}_{\und{1}-\und{a}}\big)\leq -1$,
which together with property (iii) of Definitions 
\ref{def:singularity} and \ref{def:bdycondition} implies that
\begin{flalign}
\nn \deg\big(D^i_{\und{a}} + B^i_{\und{a}}\big) \,&=\, 
\deg(\omega) - \deg\big(D^{1-d-i}_{\und{1}-\und{a}}+B^{1-d-i}_{\und{1}-\und{a}}\big)\\
\,&=\, -2 - \deg\big(D^{1-d-i}_{\und{1}-\und{a}}+B^{1-d-i}_{\und{1}-\und{a}}\big)\geq -1\quad,
\end{flalign}
where we used that $\deg(\omega) = -2$ for any meromorphic $1$-form on $C=\CP$.
The divisor-twisted Dolbeault cohomologies in this case are thus given by
\begin{flalign}
\mathsf{H}^0\Omega^{0,\bullet}\big(C,L_{D^i_{\und{a}}+B^i_{\und{a}}}\big)\,\cong \,\bbC^{N^{i}_{\und{a}}}\quad,\qquad
\mathsf{H}^1\Omega^{0,\bullet}\big(C,L_{D^i_{\und{a}}+B^i_{\und{a}}}\big)\,\cong \,0\quad,
\end{flalign}
with $N_{\und{a}}^i := 1+ \deg\big(D^i_{\und{a}} + B^i_{\und{a}}\big) \geq 0$.
\end{itemize}
\begin{rem}\label{rem:dimensions}
Observe that, as a direct consequence of property (iii) of Definitions 
\ref{def:singularity} and \ref{def:bdycondition}, the dimensions $N_{\und{a}}^i  = N^{1-d-i}_{\und{1}-\und{a}}$
associated with any two complementary labels coincide. 
\end{rem}

Choosing now any family of continuous strong deformation retracts (e.g.\ from Hodge theory)
\begin{equation}\label{eqn:defretbdy0}
\begin{tikzcd}
\big(\mathsf{H}^\bullet\Omega^{0,\bullet}\big(C,L_{D^i_{\und{a}}+B^i_{\und{a}}}\big) ,0\big)
\ar[rr,shift right=-1ex,"i^i_{\und{a}}"] && \ar[ll,shift right=-1ex,"p^i_{\und{a}}"] 
\big(\Omega^{0,\bullet}(C,L_{D^i_{\und{a}}+B^i_{\und{a}}}),\delbar \big) \ar[loop,out=-17,in=15,distance=30,swap,"h^i_{\und{a}}"]
\end{tikzcd}\quad,
\end{equation} 
for all $i\in\{1-d,\dots,0\}$ and $\und{a}\in\{0,1\}^{d+1}$, we obtain via the 
totalization construction in \cite[Appendix A.1]{BSV2} a strong deformation retract
\begin{equation}\label{eqn:defretbdy}
\begin{tikzcd}[column sep =large]
\big(\F(M),\ell_1^\prime\big) \ar[rr,shift right=-1ex,"\widetilde{i}\,:=\,i+h(\dd_M+\ell_1^\g)i"] && \ar[ll,shift right=-1ex,"\widetilde{p}\,:=\,p + p(\dd_M+\ell_1^\g)h"] 
\big(\E_{\mathscr{D}+\mathscr{B}}^{}(X),\ell_1 \big) \ar[loop,out=-23,in=23,distance=30,swap,"h"]
\end{tikzcd}
\end{equation} 
from the underlying cochain complex of the $L_\infty$-algebra from Definition \ref{def:CSbdy}
to the cochain complex defined by 
\begin{subequations}\label{eqn:bdyM}
\begin{flalign}
\F(M)\,=\,
\bigoplus_{i+\vert\und{a}\vert\leq 0}
\g^{i}\otimes \Omega^{\und{a}}(M) \otimes\bbC^{N_{\und{a}}^i}[-1]
\oplus  
\bigoplus_{i+\vert\und{a}\vert\geq 2}
\g^{i}\otimes \Omega^{\und{a}}(M) \otimes\bbC^{N_{\und{a}}^i}
\quad,
\end{flalign}
endowed with the $\bbZ$-graded vector space structure inherited from $\g$ and $\Omega^\bullet(M)$, 
and differential 
\begin{flalign}
\ell_1^\prime \,:=\, p\big(\dd_M +\ell_1^\g\big)i + p\big(\dd_M +\ell_1^\g\big)h\big(\dd_M +\ell_1^\g\big)i\quad.
\end{flalign}
\end{subequations}
Note that the description of the graded vector space 
$\F(M)$ uses our cohomology computations from the itemization above.
The notation $\bbC^{N_{\und{a}}^i}[-1]$ means that the vector
space $\bbC^{N_{\und{a}}^i}$ gets assigned cohomological degree $+1$, 
which is due to the fact that it arises as the first Dolbeault cohomology.
The following result is then a direct consequence of the homotopy
transfer theorem for $L_\infty$-algebras, see e.g.\ \cite[Chapter 10.3]{LodayVallette}.
\begin{propo}\label{prop:IFT}
The $L_\infty$-algebra 
$\big(\E_{\mathscr{D}+\mathscr{B}}^{}(X),\ell\big)$ from Definition \ref{def:CSbdy}
admits a weakly equivalent description in terms of the $L_\infty$-algebra
$\big(\F(M),\ell^\prime\big)$ whose underlying cochain complex is given
by \eqref{eqn:bdyM} and whose $L_\infty$-brackets $\ell_n^\prime$,
for $n\geq 2$, are determined by homotopy transfer along the strong deformation retract 
\eqref{eqn:defretbdy}. This weak equivalence is implemented by an $\infty$-quasi-isomorphism
\begin{flalign}
\xymatrix{
\widetilde{i}_\infty\,:\,\big(\F(M),\ell^\prime\big) \ar@{~>}[r]^-{\sim}~&~\big(\E_{\mathscr{D}+\mathscr{B}}^{}(X),\ell\big)
}
\end{flalign}
extending the quasi-isomorphism $\widetilde{i}=i+h(\dd_M+\ell_1^\g)i$ of cochain complexes from \eqref{eqn:defretbdy}.
\end{propo}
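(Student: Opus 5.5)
The proof is an application of the homotopy transfer theorem for $L_\infty$-algebras, as in \cite[Chapter 10.3]{LodayVallette}, once the strong deformation retract \eqref{eqn:defretbdy} is in place.

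\textbf{Step 1: build the retract.} We begin with the component-wise retracts \eqref{eqn:defretbdy0}, one for each $i\in\{1-d,\dots,0\}$ and $\und{a}\in\{0,1\}^{d+1}$. Tensoring with $\g^i\otimes\Omega^{\und{a}}(M)$ (completed) gives a strong deformation retract of $\E_{\mathscr{D}+\mathscr{B}}(X)$ equipped only with the differential $\delbar$. Its cohomology is computed by cases (1)--(3) above:
\begin{itemize}
\item case (1) contributes only $\mathsf{H}^1$, which is why it appears with the shift $[-1]$;
\item case (2) contributes nothing;
\item case (3) contributes only $\mathsf{H}^0$.
\end{itemize}

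\textbf{Step 2: perturb the differential.} We treat $\delta:=\dd_M+\ell_1^\g$ as a perturbation of $\delbar$. It preserves $\E_{\mathscr{D}+\mathscr{B}}(X)$ by property (i) of Definitions \ref{def:singularity} and \ref{def:bdycondition}, together with property (ii) for $n=1$. The perturbation is small because $h$ has Dolbeault degree $-1$. Indeed, $h$ maps $(0,1)$-forms to $(0,0)$-forms and vanishes on $(0,0)$-forms, so $h\delta h\delta$ lowers the $(0,\bullet)$-degree twice and vanishes. Hence $(\delta h)^2=0$ and the geometric series $\sum_k(\delta h)^k$ terminates after the linear term.

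The homological perturbation lemma, i.e.\ the totalization of \cite[Appendix A.1]{BSV2}, then gives exactly the retract \eqref{eqn:defretbdy}:
\begin{itemize}
\item $\widetilde i=i+h\delta i$ and $\widetilde p=p+p\delta h$;
\item the perturbed homotopy is $h+h\delta h$, which is presumably $h$ after applying the side conditions;
\item the transferred differential is $\ell_1'=p\delta i+p\delta h\delta i$.
\end{itemize}
We would check that $\widetilde i$ and $\widetilde p$ remain continuous, so that the completed tensor products cause no trouble.

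\textbf{Step 3: transfer the brackets.} The homotopy transfer theorem applied to this strong deformation retract produces brackets $\ell'_n$ on $\F(M)$, given by sums over rooted trees with vertices $\ell_k$, internal edges $h$, leaves $\widetilde i$ and root $\widetilde p$. It also produces an $\infty$-morphism $\widetilde i_\infty$ whose linear component is $\widetilde i$. Since $\widetilde i$ is a quasi-isomorphism, $\widetilde i_\infty$ is an $\infty$-quasi-isomorphism, which is the weak equivalence claimed.

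\textbf{Main obstacle: finiteness of the tree sums.} Each vertex $\ell_k$ with $k\ge2$ is $C^\infty$-multilinear along $M\times C$. Each $h$ lowers the Dolbeault degree by one, while the $\ell_k^\g$ add Dolbeault degrees and the total Dolbeault degree is at most $1$. Also $\ell^\g_n=0$ for $n>d+1$. Together these force every tree in the arity-$n$ formula to have boundedly many internal edges, so each $\ell'_n$ is a finite sum and well defined. No formal or Artinian parametrization is needed beyond the standard one, and the $L_\infty$ relations for $\ell'$ follow from the general theorem.
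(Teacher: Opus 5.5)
Your proposal is correct and follows the paper's route: the paper likewise obtains \eqref{eqn:defretbdy} by perturbing the totalized component-wise retracts by $\dd_M+\ell_1^\g$, and then cites the homotopy transfer theorem, which yields the brackets $\ell'_n$ and the $\infty$-quasi-isomorphism $\widetilde{i}_\infty$. Two small corrections: first, $h(\dd_M+\ell_1^\g)h=0$ holds identically by the same Dolbeault-degree count you already use, so the perturbed homotopy is exactly $h$ with no appeal to side conditions. Second, your "main obstacle" is not one, since each $\ell'_n$ is automatically a finite sum: only finitely many trees have $n$ leaves and all vertices of arity at least $2$.
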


\begin{rem}\label{rem:IFT}
In order to understand better the kind of field theory on the $(d+1)$-dimensional
manifold $M$ which is described by the weakly equivalent $L_\infty$-algebra $\big(\F(M),\ell^\prime\big)$
from Proposition \ref{prop:IFT}, let us unravel its key features. From \eqref{eqn:bdyM}, we observe
that a degree $1$ element $\Phi\in \F(M)^1$ is concretely given by a family of forms
\begin{flalign}\label{eqn:IFTformcomponents}
\Phi_{\und{a}} \,\in\, \g^{-\vert\und{a}\vert}\otimes \Omega^{\und{a}}(M)\otimes \bbC^{N^{-\vert\und{a}\vert}_{\und{a}}}\quad,
\end{flalign}
for all $\und{a}\in\{0,1\}^{d+1}$ such that $\vert \und{a}\vert \leq d-1$, which take
values in the tensor product of the component $\g^{-\vert\und{a}\vert}$ of the 
structure $L_\infty$-algebra $(\g,\ell^\g)$ and a vector space of dimension 
$N^{-\vert\und{a}\vert}_{\und{a}}\geq 0$. These are the fields
of the theory described by the $L_\infty$-algebra $\big(\F(M),\ell^\prime\big)$. Their dynamics
is given by the Maurer-Cartan equation, which in the present case reads as follows
\begin{flalign}\label{eqn:MCequationIFT}
p\,\dd_M \,h\,\dd_M\,i\big(\Phi\big) + p\,\big(\dd_M \,h \,\ell_1^\g + \ell_1^\g\,h\,\dd_M\big) \,i\big(\Phi\big)
+ p\,\ell_1^\g \,h\,\ell_1^\g \,i\big(\Phi\big) + \sum_{n\geq 2}\tfrac{1}{n!}\,\ell_n^\prime\big(\Phi^{\otimes n}\big)\,=\,0\quad,
\end{flalign}
where we used that, as a consequence of the vanishing cohomologies in item (2) from above, 
the term $p\big(\dd_M +\ell_1^\g\big)i$ in the differential $\ell_1^\prime$ from \eqref{eqn:bdyM}
acts trivially on such $\Phi$. We have ordered the terms in \eqref{eqn:MCequationIFT} such that
they display clearly a second-order, first-order and zeroth-order linear differential operator on $M$
and an interaction term which is determined by the $L_\infty$-brackets $\ell_n^\prime$, for $n\geq 2$.
Hence, the fields described by the $L_\infty$-algebra $\big(\F(M),\ell^\prime\big)$ satisfy a non-linear 
second-order partial differential equation that mixes the various form components in \eqref{eqn:IFTformcomponents}.
\sk

In the case of $d\geq 2$, this field theory on the $(d+1)$-dimensional manifold
$M$ comes with $k$-gauge transformations $\epsilon_{(1-k)}\in \F(M)^{1-k}$,
for all $k=1,\dots,d-1$ reaching up to $d-1$, while for the traditional case
of $d=1$ there are no non-trivial gauge transformations. These (higher)
gauge transformations can be described easily by specializing 
the general presentation from Remark \ref{rem:Linftyalgebra}, 
but we do not find this instructive so we will not spell out the details.
Note that the categorical depth of the gauge transformations in the 
$L_\infty$-algebra $\big(\F(M),\ell^\prime\big)$ is reduced by $1$
with respect to the original topological-holomorphic $(\g,\ell^\g)$-Chern-Simons theory
$\big(\E(X),\ell\big)$ on $X = M\times C$ that we have started from in Section \ref{sec:higherCS},
which has $k$-gauge transformations reaching up to $k=d$.
\end{rem}

We announced above that the $L_\infty$-algebra $\big(\F(M),\ell^\prime\big)$
describes an integrable field theory on the $(d+1)$-dimensional manifold $M$, 
so let us now substantiate this claim by showing how one can assign to it 
higher Lax connections as in Section \ref{sec:higherLax}. Since the divisors
$\mathscr{B} = \{B^{i}_{\und{a}} \leq 0\}$ entering our Definition \ref{def:bdycondition}
of local boundary conditions are all non-positive, we have partial divisor orderings
$D^i_{\und{a}} + B^{i}_{\und{a}}\leq D^i_{\und{a}}$ and their associated holomorphic
line bundle morphisms $L_{D^i_{\und{a}} + B^{i}_{\und{a}}}\to L_{D^i_{\und{a}}}$, for
all $i\in\{1-d,\dots,0\}$ and $\und{a}\in\{0,1\}^{d+1}$. These define a (strict) morphism 
\begin{flalign}
\xymatrix{
\big(\E_{\mathscr{D}+\mathscr{B}}(X),\ell\big)\ar[r]~&~\big(\E_{\mathscr{D}}(X),\ell\big)
}
\end{flalign}
from the $L_\infty$-algebra $\big(\E_{\mathscr{D}+\mathscr{B}}(X),\ell\big)$
of topological-holomorphic $(\g,\ell^\g)$-Chern-Simons theory with singularities
$\mathscr{D}$ and boundary conditions $\mathscr{B}$ from Definition \ref{def:CSbdy} 
to the $L_\infty$-algebra $\big(\E_{\mathscr{D}}(X),\ell\big)$
of topological-holomorphic $(\g,\ell^\g)$-Chern-Simons theory with singularities
$\mathscr{D}$ but no boundary conditions from Definition \ref{def:CSsingularity}. 
Combining this with Propositions \ref{prop:higherLax} 
and \ref{prop:IFT} leads immediately to the following result
which gives a precise statement about the integrability of $\big(\F(M),\ell^\prime\big)$.
\begin{theo}\label{theo:mapLax}
Our constructions above define a (weak) $L_\infty$-morphism
\begin{flalign}
\begin{gathered}
\xymatrix@C=4em{
\ar@{~>}[d]_-{\widetilde{i}_{\infty}}^-{\sim} \big(\F(M),\ell^\prime\big) \ar@{~>}[r]^-{}~&~\big(\L(M),\ell^\prime\big) \\
\big(\E_{\mathscr{D}+\mathscr{B}}(X),\ell\big) \ar[r] ~&~\big(\E_{\mathscr{D}}(X),\ell\big)\ar@{~>}[u]^-{\sim}_-{i^{-1}}
}
\end{gathered}
\end{flalign}
mapping from the $L_\infty$-algebra $\big(\F(M),\ell^\prime\big)$ 
describing a field theory on the $(d+1)$-dimensional manifold $M$  
(see Proposition \ref{prop:IFT} and Remark \ref{rem:IFT}) to the 
$L_\infty$-algebra $\big(\L(M),\ell^\prime\big)$ describing
higher Lax connections on $M$ (see Proposition \ref{prop:higherLax} and Remark \ref{rem:higherLax}).
\end{theo}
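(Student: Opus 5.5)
The plan is to build the desired morphism as the composite of three pieces and to check that each piece is a genuine $L_\infty$-morphism; weak $L_\infty$-morphisms compose. The three pieces are as follows.

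\emph{First piece.} The $\infty$-quasi-isomorphism $\widetilde{i}_\infty : (\F(M),\ell^\prime)\rightsquigarrow(\E_{\mathscr{D}+\mathscr{B}}(X),\ell)$ comes from Proposition \ref{prop:IFT}. It is produced by the homotopy transfer theorem along \eqref{eqn:defretbdy}; its linear component is $\widetilde{i}$, and its higher components are given by the usual tree formulas built from $h$ and the $\ell_n^\g$.

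\emph{Second piece.} There is a strict morphism $\E_{\mathscr{D}+\mathscr{B}}(X)\to\E_{\mathscr{D}}(X)$. It is induced component-wise by the holomorphic line bundle morphisms $L_{D^i_{\und{a}}+B^i_{\und{a}}}\to L_{D^i_{\und{a}}}$, which exist because $B^i_{\und{a}}\leq 0$. To see that it is strict, one observes that these bundle maps are the canonical inclusions of sheaves of sections with weaker pole conditions. They therefore commute with $\delbar$ and with $\dd_M$, since they are the identity on $\Omega^{\und{a}}(M)$ and are compatible with the successor maps. They also commute with $\ell_1^\g$ and with each $\ell_n^\g$, because the bundle maps for sums of divisors are compatible with the tensor identifications $L_D\otimes L_{D'}\cong L_{D+D'}$ used in both source and target. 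The cleanest way to say this is that both $L_\infty$-algebras sit inside $\g\otimes\Omega^\bullet(M)\widehat{\otimes}\Omega^{0,\bullet}(C,\M)$, with all brackets restricted from a common ambient structure, and the map is just the inclusion of subspaces.

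\emph{Third piece.} We need a weak inverse $i^{-1}$ of the strict weak equivalence $i:(\L(M),\ell^\prime)\to(\E_{\mathscr{D}}(X),\ell)$ from Proposition \ref{prop:higherLax}. This is the main point requiring care. I would not invoke abstract inversion of $\infty$-quasi-isomorphisms. Instead I would construct $i^{-1}$ concretely as the $\infty$-morphism $\widetilde{p}_\infty$ obtained from homotopy transfer along the strong deformation retract \eqref{eqn:defretsingular}; its linear component is $\widetilde{p}=p+p(\dd_M+\ell_1^\g)h$.

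Two checks are needed for this construction.
\begin{itemize}
\item The transferred brackets on $\L(M)$ must coincide with $\ell^\prime_n=\ell^\g_n$. The first thing to try is to use that $i$ is strict, so that $\ell_n\circ i^{\otimes n}$ already lands in $i(\L(M))$. Combined with the side conditions $hi=0$, $ph=0$, $hh=0$ (which we may impose on the retract), this kills all tree terms with internal $h$-edges, leaving $\ell^\prime_n=\widetilde{p}\,\ell_n\,i^{\otimes n}=\ell^\g_n$.
\item $\widetilde{p}_\infty$ is a quasi-inverse of $i$. This holds because $\widetilde{p}\,i=\id$ and both are quasi-isomorphisms.
\end{itemize}
If one prefers, the standard fact that every $\infty$-quasi-isomorphism admits a homotopy inverse suffices, since the statement only asserts existence of a weak morphism.

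Composing $\widetilde{p}_\infty\circ(\text{strict map})\circ\widetilde{i}_\infty$ then yields the asserted weak $L_\infty$-morphism $(\F(M),\ell^\prime)\rightsquigarrow(\L(M),\ell^\prime)$, and the diagram commutes by construction. As a remark, I would add that $L_\infty$-morphisms send Maurer-Cartan elements to Maurer-Cartan elements via $\Phi\mapsto\sum_n\frac{1}{n!}f_n(\Phi^{\otimes n})$. This gives the claimed assignment of a higher Lax connection to every solution of \eqref{eqn:MCequationIFT}.
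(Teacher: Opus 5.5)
Your proposal is correct and matches the paper's argument: the morphism is the composite $i^{-1}\circ(\text{strict map induced by } L_{D^i_{\und{a}}+B^i_{\und{a}}}\to L_{D^i_{\und{a}}})\circ\widetilde{i}_\infty$, which the paper obtains immediately from Propositions \ref{prop:higherLax} and \ref{prop:IFT}. Your concrete model of $i^{-1}$ as the $\infty$-extension $\widetilde{p}_\infty$ of $\widetilde{p}$ along \eqref{eqn:defretsingular}, where the paper uses an abstract homotopy inverse, is a harmless refinement, and your tree argument via $hi=0$ correctly shows that the transferred brackets reproduce $\ell^\g_n$.
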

\begin{rem}\label{rem:mapLax}
It is precisely this $L_\infty$-morphism which endows $\big(\F(M),\ell^\prime\big)$ with the structure
of an integrable field theory on the $(d+1)$-dimensional manifold $M$, 
in the sense that it assigns to every field $\Phi$, i.e.\ a Maurer-Cartan element 
of the $L_\infty$-algebra $\big(\F(M),\ell^\prime\big)$, 
a higher Lax connection $A$, i.e.\ a Maurer-Cartan element of the $L_\infty$-algebra
$\big(\L(M),\ell^\prime\big)$, see also Remarks \ref{rem:higherLax} and \ref{rem:IFT}.
\end{rem}

We conclude this section by showing that 
the cyclic structure $\pair{\cdot}{\cdot}_\omega^{}$ of degree $-3$ 
on the $L_\infty$-subalgebra $\big(\E_{\mathscr{D}+\mathscr{B},\cc}(X),\ell\big)\subseteq 
\big(\E_{\mathscr{D}+\mathscr{B}}(X),\ell\big)$ from Proposition \ref{prop:cyclicCS}
admits a transfer to a cyclic structure of degree $-3$ on the $L_\infty$-subalgebra
$\big(\F_\cc(M),\ell^\prime\big)\subseteq \big(\F(M),\ell^\prime\big)$ of compactly supported
sections of the $L_\infty$-algebra from Proposition \ref{prop:IFT}.
This provides in particular an action functional for the latter 
$(d+1)$-dimensional integrable field theory.
\begin{propo}\label{prop:cyclictransfer}
There exists a choice for the component-wise continuous strong deformation retracts in
\eqref{eqn:defretbdy0} such that the totalized strong deformation retract \eqref{eqn:defretbdy} 
satisfies the following compatibility conditions:
\begin{itemize}
\begin{subequations}\label{eqn:cycliccompatible}
\item[(i)] For all $\alpha\in\mathrm{im}(\widetilde{i})\subseteq \E_{\mathscr{D}+\mathscr{B},\cc}(X)$ 
and $\beta\in \mathrm{ker}(\widetilde{p})\subseteq \E_{\mathscr{D}+\mathscr{B},\cc}(X)$,
\begin{flalign}\label{eqn:cycliccompatible1}
\pair{\alpha}{\beta}_{\omega}^{}\,=\,0\quad.
\end{flalign}

\item[(ii)] For all $\alpha,\beta\in \E_{\mathscr{D}+\mathscr{B},\cc}(X)$,
\begin{flalign}\label{eqn:cycliccompatible2}
\pair{h(\alpha)}{\beta}_{\omega}^{} \,=\,(-1)^{\vert\alpha\vert} \,\pair{\alpha}{h(\beta)}_{\omega}^{}\quad.
\end{flalign}
\end{subequations}
\end{itemize}
By the results of \cite{LazarevHodge,LazarevCS}, this implies that the transferred pairing
\begin{flalign}
\xymatrix{
\pair{\cdot}{\cdot}_{\omega}^\prime\,:=\,\pair{\cdot}{\cdot}_\omega^{}\circ (\widetilde{i}\otimes\widetilde{i})\,:\, 
\F_\cc(M)\otimes \F_\cc(M)\ar[r]~&~\bbC
}
\end{flalign}
defines a cyclic structure of degree $-3$ on the $L_\infty$-subalgebra 
$\big(\F_\cc(M),\ell^\prime\big)\subseteq \big(\F(M),\ell^\prime\big)$ of compactly supported
sections of the $L_\infty$-algebra from Proposition \ref{prop:IFT}.
\end{propo}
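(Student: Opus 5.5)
The plan is to reduce both compatibility conditions to statements about the component-wise Dolbeault retracts \eqref{eqn:defretbdy0} on $C$, and to choose these retracts via Hodge theory so that they are compatible with the fibrewise residue-type pairing
\begin{flalign*}
\Omega^{0,\bullet}\big(C,L_{D^i_{\und{a}}+B^i_{\und{a}}}\big)\otimes \Omega^{0,\bullet}\big(C,L_{D^{1-d-i}_{\und{1}-\und{a}}+B^{1-d-i}_{\und{1}-\und{a}}}\big)\longrightarrow \bbC\quad,\qquad (s,t)\longmapsto \int_C \omega\wedge s\, t\quad.
\end{flalign*}
By property (iii) of Definitions \ref{def:singularity} and \ref{def:bdycondition}, the two line bundles are dual up to the twist by $L_{(\omega)}$, so this pairing is Serre duality between $L:=L_{D+B}$ and $K_C\otimes L^{-1}$, with $\omega$ providing the identification $L_{(\omega)}\cong K_C$. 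Choosing a hermitian metric on $C$ and on $L$, and the dual metric on the complementary bundle, the Hodge decompositions and harmonic projections give retracts $(i^i_{\und{a}},p^i_{\und{a}},h^i_{\und{a}})$, with $h=\delbar^\ast G$, for which the Hodge star and conjugation intertwine the complementary Laplacians. The key facts are then: harmonic representatives pair with $\mathrm{im}(\delbar)\oplus\mathrm{im}(\delbar^\ast)$ trivially under Serre duality, and $\delbar^\ast G$ on one side is, up to sign, the transpose of $\delbar^\ast G$ on the complementary side. Concretely, I would pick the retract for one member of each complementary pair $(i,\und{a})\leftrightarrow(1-d-i,\und{1}-\und{a})$ arbitrarily from Hodge theory and define the retract for the partner as its Serre-dual transpose. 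This is consistent with Remark \ref{rem:dimensions} and with the case split (1)/(3), since $\mathsf{H}^1$ on one side is dual to $\mathsf{H}^0$ on the other. In the self-complementary case one checks directly that the Hodge choice is self-dual.

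With the fibrewise statements in hand I would lift to the totalized data. The totalized homotopy is $h=\sum_{k\geq0} h_0\big(-(\dd_M+\ell_1^\g)h_0\big)^k$, and similarly $\widetilde{i}$ and $\widetilde{p}$ are perturbed by $\delta:=\dd_M+\ell_1^\g$, where $h_0,i_0,p_0$ denote the component-wise data tensored with the identity on $\g\otimes\Omega^{\und{a}}(M)$. Proposition \ref{prop:cyclicCS} shows that $\delta$ is graded skew-adjoint for $\pair{\cdot}{\cdot}_\omega$. The reason is that $\dd_M$ is skew by Stokes' theorem on $M$, which requires compact support, and $\ell_1^\g$ is skew by cyclicity of $\pair{\cdot}{\cdot}^\g$. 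The fibrewise step gives that $h_0$ is graded self-adjoint with the sign in \eqref{eqn:cycliccompatible2}, and that $\mathrm{im}(i_0)\perp\ker(p_0)$. Expanding the perturbation series term by term, each word $h_0\delta h_0\cdots\delta h_0$ then has adjoint equal to the reversed word with compensating signs, which is again a term of the series; this yields (ii). For (i) I would use the standard identities $\ker\widetilde{p}=\mathrm{im}(\ell_1 h+h\ell_1)$ and the side conditions $hh=0$, $h\widetilde{i}=0$, $\widetilde{p}h=0$. It then suffices to show $\pair{\widetilde{i}x}{h\beta}_\omega=0$ and $\pair{\widetilde{i}x}{\ell_1 h\beta}_\omega=0$. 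The first follows from (ii) together with $h\widetilde{i}=0$. The second follows from skewness of $\ell_1$, the identity $\ell_1\widetilde{i}=\widetilde{i}\ell_1^\prime$, and then the first. Finally, the cited results of \cite{LazarevHodge,LazarevCS} give cyclicity of the transferred brackets.

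The main obstacle is the sign bookkeeping and the genuinely geometric part, namely showing that the Hodge-theoretic homotopy $\delbar^\ast G$ for $L$ is the Serre-dual transpose of the one for $K_C\otimes L^{-1}$ when the divisors have different degrees. Here I would use the conjugate-linear Hodge star $\bar\ast:\Omega^{0,q}(L)\to\Omega^{1,1-q}(L^{-1})$ composed with $\omega^{-1}$. A further subtlety is the poles and zeros of $\omega$: the pairing uses $\omega$ only through the holomorphic isomorphism $L_{(\omega)}\cong K_C$, so the integrals remain finite. Compact support in $M$, needed for Stokes' theorem, is preserved since all maps act fibrewise in $C$ or as differential operators on $M$.
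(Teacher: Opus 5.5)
Your proposal is correct. It has the same two-step shape as the paper's proof: first choose fibrewise compatible retracts, then pass to the deformed totalization. Both steps, however, are carried out differently.

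\emph{Fibrewise step.} The paper simply invokes \cite[Appendix A.3]{BSV2} for retracts whose \emph{undeformed} totalization already satisfies (i) and (ii). Your construction, using dual hermitian metrics on complementary bundles and the conjugate-linear Hodge star, supplies this input explicitly. Two small points here:
\begin{itemize}
\item There are no self-complementary labels, since $\und{a}\neq\und{1}-\und{a}$ for every $\und{a}\in\{0,1\}^{d+1}$, so that case is vacuous.
\item There is no genuine sign ambiguity. The graded transpose of a retract homotopy, for a pairing in which $\delbar$ is skew, is automatically a homotopy (with side conditions) for the transposed retract. So (ii) holds by construction, and the Hodge-star identification is needed only to see that the transpose preserves smooth forms.
\end{itemize}

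\emph{Deformation step, condition (ii).} The paper observes that the homotopy is not deformed at all. The map $h$ sends Dolbeault degree $1$ to $0$, while $\dd_M+\ell_1^\g$ preserves Dolbeault degree, so $h(\dd_M+\ell_1^\g)h=0$. Your perturbation series therefore collapses to $h_0$, and (ii) for the deformed retract is literally the undeformed statement. Your palindromic-word argument is valid but unnecessary.

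\emph{Deformation step, condition (i).} The paper expands $\widetilde{i}(\rho)=i(\rho)+h(\dd_M+\ell_1^\g)i(\rho)$ and moves $h$ and $\dd_M+\ell_1^\g$ across the pairing. It then applies the undeformed orthogonality $\mathrm{im}(i)\perp\ker(p)$ to $\beta+(\dd_M+\ell_1^\g)h(\beta)$, whose image under $p$ is $\widetilde{p}(\beta)=0$.

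Your argument via $\ker\widetilde{p}=\mathrm{im}(\ell_1h+h\ell_1)$ is genuinely different. It derives (i) from:
\begin{itemize}
\item condition (ii);
\item the side condition $h\widetilde{i}=0$;
\item skewness of $\ell_1$;
\item the chain-map property $\ell_1\widetilde{i}=\widetilde{i}\,\ell_1^\prime$.
\end{itemize}
This shows that, once side conditions hold, the fibrewise orthogonality $\mathrm{im}(i_0)\perp\ker(p_0)$ is redundant. The paper's argument, by contrast, uses that orthogonality but needs no side conditions.
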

\begin{proof}
By the construction in \cite[Appendix A.3]{BSV2}, one can choose
the component-wise continuous strong deformation retracts in
\eqref{eqn:defretbdy0} such that their \textit{undeformed} totalization
\begin{equation}\label{def:undefSDR}
\begin{tikzcd}
\big(\F(M),0\big) \ar[rr,shift right=-1ex,"i"] && \ar[ll,shift right=-1ex,"p"] 
\big(\E_{\mathscr{D}+\mathscr{B}}^{}(X),\delbar\big) \ar[loop,out=-25,in=25,distance=30,swap,"h"]
\end{tikzcd}
\end{equation} 
satisfies the compatibility conditions \eqref{eqn:cycliccompatible}.
We now show that its deformation $(\widetilde{i},\widetilde{p},h)$
given in \eqref{eqn:defretbdy} satisfies these compatibility conditions too.
While the second property \eqref{eqn:cycliccompatible2} holds true automatically
since the cochain homotopy $h$ does not receive any deformation, verifying
the first property \eqref{eqn:cycliccompatible1} requires a brief argument.
Given any $\alpha\in\mathrm{im}(\widetilde{i})\subseteq \E_{\mathscr{D}+\mathscr{B},\cc}(X)$ 
and $\beta\in \mathrm{ker}(\widetilde{p})\subseteq \E_{\mathscr{D}+\mathscr{B},\cc}(X)$,
we can write $\alpha = \widetilde{i}(\rho) = i(\rho) + h(\dd_M + \ell_1^\g)i(\rho)$ for some 
$\rho\in \F_\cc(M)$ and compute
\begin{flalign}
\nn \pair{\alpha}{\beta}_{\omega}^{} \,&=\, 
\pair{i(\rho)}{\beta}_{\omega}^{} + \pair{h(\dd_M + \ell_1^\g)i(\rho)}{\beta}_{\omega}^{}\\
\nn \,&=\,\pair{i(\rho)}{\beta}_{\omega}^{} + (-1)^{\vert \rho\vert+1}\pair{(\dd_M + \ell_1^\g)i(\rho)}{h(\beta)}_{\omega}^{}\\
\nn \,&=\, \pair{i(\rho)}{\beta}_{\omega}^{} - (-1)^{\vert \rho\vert+1}\, (-1)^{\vert \rho\vert}\,
\pair{i(\rho)}{(\dd_M + \ell_1^\g)h(\beta)}_{\omega}^{}\\
\,&=\,\pair{i(\rho)}{\beta + (\dd_M + \ell_1^\g)h(\beta)}_{\omega}^{} \,=\,0\quad,
\end{flalign}
where in the second step we used \eqref{eqn:cycliccompatible2} and 
in the third step we used that $\pair{\cdot}{\cdot}_{\omega}^{}$ 
is, by construction, not only compatible with the total differential $\delbar + \dd_M + \ell_1^\g$,
but also with its individual summands $\delbar$, $\dd_M$ and $\ell_1^\g$.
In the last step we used that $p\big(\beta + (\dd_M + \ell_1^\g)h(\beta)\big) = \widetilde{p}(\beta)=0$,
by hypothesis, hence the displayed term in the last line vanishes since the undeformed totalization \eqref{def:undefSDR}
has been previously chosen so as to satisfy the compatibility condition \eqref{eqn:cycliccompatible1}.
\end{proof}

%%%%%%%%%%%%%%%%%%%%%%%%%%%%%%%%%%%%%%%%%%%%%%%%
%%%%%%%%%%%%%%%%%%%%%%%%%%%%%%%%%%%%%%%%%%%%%%%%

\section{\label{sec:examples}Examples}
\subsection{\label{subsec:exsglbdy}Singularity structures and local boundary conditions}
In this subsection we present explicit examples of singularity structures and local boundary conditions
in the sense of Definitions \ref{def:singularity} and \ref{def:bdycondition}, for any choice of
integer $d\in \bbZ_{\geq 1}$. Our choice of examples is motivated by their simplicity and we do not claim
that they exhaust all possibilities. 
\begin{ex}\label{ex:example1}
Consider a meromorphic $1$-form $\omega$ on $C=\CP$ which has $2d$ zeros,
taking the form of $d+1$ simple zeros $q_1,\dots,q_{d+1}\in C$ and a zero $q_0\in C$
of order $d-1$, and consequently $2(d+1)$ poles $p_1,\dots,p_{d+1},\tilde{p}_{1},\dots,\tilde{p}_{d+1}\in C$.
Setting
\begin{flalign}
D_{\und{a}}^i\,:=\,\begin{cases}
0 ~&~,~~\text{for }\vert\und{a}\vert = 0\\
\big(\vert \und{a}\vert -1\big)\,q_0 + \sum_{k=1}^{d+1} a_k\,q_k  ~&~,~~\text{for } 1\leq \vert\und{a}\vert\leq d\\
(\omega)_0~&~,~~\text{for }\vert\und{a}\vert= d+1
\end{cases}
\quad,
\end{flalign}
for all $i\in\{1-d,\dots,0\}$ and $\und{a}\in\{0,1\}^{d+1}$, defines a 
singularity structure $\mathscr{D}$ in the sense of Definition \ref{def:singularity}.
Property (i) is satisfied by design. For property (ii), the non-trivial check is to consider
the case of $1\leq \vert\und{a}_k\vert\leq d$, for all $k=1,\dots,n$, where we have
\begin{flalign}
\nn \sum_{k=1}^n D_{\und{a}_k}^{i_k}\,&=\,\sum_{k=1}^n\bigg(\big(\vert \und{a}_k\vert-1\big)\,q_0 + \sum_{j=1}^{d+1} a_{kj}\,q_j \bigg)
\,=\,\bigg(\Big\vert \sum_{k=1}^n\und{a}_k\Big\vert-n\bigg)\,q_0 + \sum_{j=1}^{d+1}\sum_{k=1}^n a_{kj}\,q_j\\
\,&\leq\,\bigg(\Big\vert \sum_{k=1}^n\und{a}_k\bigg\vert-1\Big)\,q_0 + \sum_{j=1}^{d+1}\sum_{k=1}^n a_{kj}\,q_j\,=\,
D^{2-n+\sum_{k=1}^ni_k}_{\sum_{k=1}^n\und{a}_k}\quad.
\end{flalign}
For property (iii), the non-trivial check is again to consider the case of $1\leq \vert \und{a}\vert\leq d$, where we have
\begin{flalign}
D^i_{\und{a}} + D^{1-d-i}_{\und{1}-\und{a}}\,=\,\Big(\vert\und{a}\vert -1 + \vert \und{1}-\und{a}\vert -1\Big) \,q_0
+ \sum_{k=1}^{d+1} q_k \,=\,(d-1)\,q_0 + \sum_{k=1}^{d+1} q_k \,=\,(\omega)_0\quad.
\end{flalign}
Setting further
\begin{flalign}
B_{\und{a}}^i\,:=\,\begin{cases}
(\omega)_\infty~&~,~~\text{for }i+\vert\und{a}\vert \leq 0\\
-\sum_{k=1}^{d+1} a_k\,\big(p_k+\tilde{p}_k\big)~&~,~~\text{for }i+\vert\und{a}\vert =1\\
0 ~&~,~~\text{for }i+\vert\und{a}\vert \geq 2
\end{cases}\quad,
\end{flalign}
for all $i\in\{1-d,\dots,0\}$ and $\und{a}\in\{0,1\}^{d+1}$, defines a corresponding
local boundary condition in the sense of Definition \ref{def:bdycondition}. Properties (i) and (iii)
are satisfied by design, and property (iv) is shown by the following calculation: For 
total degree $i+\vert\und{a}\vert=1$, one necessarily has that 
$1\leq \vert\und{a}\vert \leq d$ since $i\in \{1-d,\dots,0\}$, hence
\begin{flalign}
\deg\big(D^i_{\und{a}} + B^i_{\und{a}}\big)\,=\, \vert \und{a}\vert -1 + \vert \und{a}\vert - 2\,\vert \und{a}\vert \,=\,-1\quad.
\end{flalign}
It remains to verify property (ii). Since $(\omega)_\infty\leq B^i_{\und{a}}\leq 0$, this property is clearly satisfied
whenever at least one of the total degrees $i_k +\vert \und{a}_k\vert\leq 0$ is non-positive, since in this case
the sum on the left-hand side of \eqref{eqn:bdypropii} can be estimated by $\leq (\omega)_\infty$. This 
reduces the problem to the case where $i_k +\vert \und{a}_k\vert\geq 1$, for all $k=1,\dots,n$.
In this case the total degree on the right-hand side of \eqref{eqn:bdypropii} is given by 
\begin{flalign}
2-n + \sum_{k=1}^{n} \big(i_k +\vert \und{a}_k\vert \big) \,\geq 2\quad,
\end{flalign}
which implies that the right-hand side of \eqref{eqn:bdypropii} is $0$ and hence the estimate holds true.
\end{ex}

\begin{ex}\label{ex:example2}
Consider a meromorphic $1$-form $\omega$ on $C=\CP$ which has $d+1$ simple zeros $q_1,\dots,q_{d+1}\in C$
and consequently $d+3$ poles, taking the form of $d+1$ simple poles
$p_1,\dots,p_{d+1}\in C$ and a double pole $p_0\in C$. Setting
\begin{flalign}
D_{\und{a}}^i\,:=\, \sum_{k=1}^{d+1} a_k\,q_k\quad,
\end{flalign}
for all $i\in\{1-d,\dots,0\}$ and $\und{a}\in\{0,1\}^{d+1}$, clearly defines a 
singularity structure $\mathscr{D}$ in the sense of Definition \ref{def:singularity}.
Setting further
\begin{flalign}
B_{\und{a}}^i\,:=\,\begin{cases}
(\omega)_\infty~&~,~~\text{for }i+\vert\und{a}\vert \leq 0\\
-p_0-\sum_{k=1}^{d+1}a_k\,p_k~&~,~~\text{for }i+\vert\und{a}\vert =1\\
0 ~&~,~~\text{for }i+\vert\und{a}\vert \geq 2
\end{cases}\quad,
\end{flalign}
for all $i\in\{1-d,\dots,0\}$ and $\und{a}\in\{0,1\}^{d+1}$, defines a corresponding
local boundary condition in the sense of Definition \ref{def:bdycondition}. Properties
(i), (iii) and (iv) are satisfied by design and property (ii) is shown by the same argument as
in Example \ref{ex:example1}.
\end{ex}

\begin{ex}\label{ex:example3}
In this last example we assume that the $d$-term structure $L_\infty$-algebra $(\g,\ell^\g)$ has 
trivial higher Lie brackets, i.e.\ $\ell^\g_n =0$ for all $n\geq 3$.
Consider a meromorphic $1$-form $\omega$ on $C=\CP$ which has $d+1$ simple zeros $q_1,\dots,q_{d+1}\in C$
and consequently $d+3$ poles, taking the form of two poles $p_1,p_2\in C$ of order $2$ 
and a pole $p_0\in C$ of order $d-1$. Setting
\begin{flalign}
D_{\und{a}}^i\,:=\, \sum_{k=1}^{d+1} a_k\,q_k\quad,
\end{flalign}
for all $i\in\{1-d,\dots,0\}$ and $\und{a}\in\{0,1\}^{d+1}$, defines the same singularity structure
as in Example \ref{ex:example2}. Setting further
\begin{flalign}
B_{\und{a}}^i\,:=\, -p_1 -p_2 + i\, p_0
\end{flalign}
for all $i\in\{1-d,\dots,0\}$ and $\und{a}\in\{0,1\}^{d+1}$, defines a corresponding
local boundary condition in the sense of Definition \ref{def:bdycondition}.
Properties (i), (iii) and (iv) are satisfied by design. As a consequence of our assumption
that $\ell^\g_n =0$, for all $n\geq 3$, property (ii) reduces to two simple 
checks for the two non-vanishing $L_\infty$-brackets $\ell_1^\g$ and $\ell_2^\g$.
The inequality $B^i_{\und{a}} \leq B^{i+1}_{\und{a}}$ for arity $1$ clearly holds true,
while the inequality for arity $2$ is shown by the calculation
\begin{flalign}
B^{i_1}_{\und{a}_1}+ B^{i_2}_{\und{a}_2}\,=\, -2p_1-2p_2+(i_1+i_2)\,p_0\,\leq\,
-p_1-p_2+(i_1+i_2)\,p_0\,=\,B^{i_1+i_2}_{\und{a}_1+\und{a}_2}\quad.
\end{flalign}
With a similar calculation one can understand why higher $L_\infty$-brackets $\ell^\g_n$,
for $n\geq 3$, are not admissible in this example: For the left-hand side of \eqref{eqn:bdypropii}
one finds that 
\begin{subequations}
\begin{flalign}
\sum_{k=1}^n B^{i_k}_{\und{a}_k} \,=\, -np_1-np_2 + \bigg(\sum_{k=1}^n i_k\bigg)\,p_0\quad,
\end{flalign}
while the right-hand side reads as 
\begin{flalign}
B^{2-n+\sum_{k=1}^n i_k}_{\sum_{k=1}^n\und{a}_k} \,=\,-p_1-p_2 + \bigg(2-n+\sum_{k=1}^n i_k\bigg)\,p_0\quad.
\end{flalign}
\end{subequations}
For $n\geq 3$, the prefactors of $p_0$ satisfy 
$\sum_{k=1}^n i_k > 2-n+\sum_{k=1}^n i_k$, which implies
that property (ii) is violated $\sum_{k=1}^n B^{i_k}_{\und{a}_k} \not\leq B^{2-n+\sum_{k=1}^n i_k}_{\sum_{k=1}^n\und{a}_k} $ 
in this case.
\end{ex}

\subsection{\label{subsec:exdynamics}Linearized dynamics of \texorpdfstring{$3$}{3}-dimensional integrable field theories}
In this subsection we specialize to the case of $d=2$, i.e.\ $(d+3=5)$-dimensional
topological-holomorphic Chern-Simons theory and its associated $(d+1=3)$-dimensional
integrable field theories from Proposition \ref{prop:IFT}. For the structure $(d=2)$-term $L_\infty$-algebra
$(\g,\ell^\g)$ we choose the shifted tangent $L_\infty$-algebra
$\g = T[1]\h$ of an ordinary Lie algebra $\h$, which is given explicitly
by $\g^0 = \g^{-1} =\h$ with $\ell^\g_n = 0$ trivial, for all $n\neq 2$,
and $\ell_2^\g=[\cdot,\cdot]_\h$ the Lie bracket of $\h$, acting on both the degree $0$ and $-1$
components of $\g$. A cyclic structure of degree $d-1=1$ on $(\g,\ell^\g)$
is then given by a choice of ad-invariant symmetric non-degenerate
pairing $\h\otimes\h\to \bbC$ on the Lie algebra $\h$. 
(Note that this agrees with the choices made in the earlier work 
\cite[Section 5.3]{SV} on $5d$ topological-holomorphic $2$-Chern-Simons theories.)
\sk

We will now provide a qualitative analysis of 
the underlying cochain complexes \eqref{eqn:bdyM}
of the $3$-dimensional integrable field theories which are associated
with our singularity structures $\mathscr{D}$ and local boundary conditions 
$\mathscr{B}$ from Subsection \ref{subsec:exsglbdy}. The form of these 
cochain complexes is dictated by our cohomology computations from the itemization before 
Remark \ref{rem:dimensions}, so one can already obtain valuable insights by 
simply computing the dimensions $N^i_{\und{a}}$ from the degrees
of the divisors entering $\mathscr{D}$ and $\mathscr{B}$.
\begin{ex}\label{ex:complex1}
The underlying cochain complex \eqref{eqn:bdyM} of the $3$-dimensional
integrable field theory which is associated with the singularity structure
and local boundary condition from Example \ref{ex:example1} in the case of $d=2$
takes the form
\begin{flalign}
\FFF(M)\,=\,
\left(\begin{gathered}
\xymatrix@R=0.2em{
0 \ar[r]^-{0}~&~ \Omega^0(M,\h^{5}) \ar[r]^-{\ell_1^\prime}~&~ \Omega^2(M,\h^{4}) \ar[r]^-{\ell_1^\prime}
~&~ \Omega^3(M,\h^{5})\\
\oplus ~&~\oplus ~&~\oplus ~&~\oplus \\
\Omega^0(M,\h^{5}) \ar[r]^-{\ell_1^\prime}~&~ \Omega^1(M,\h^{4}) \ar[r]^-{\ell_1^\prime}~&~ \Omega^3(M,\h^{5}) 
\ar[r]^-{0}~&~ 0
}
\end{gathered}
\right)\quad,
\end{flalign}
where the columns run from cohomological degree $0$ to $3$. The reason why the
two rows do not mix lies in our particular choice of structure $L_\infty$-algebra $\g = T[1]\h$
for which $\ell_1^\g =0$. 
\sk

The top row can be interpreted as a theory describing an $\h^{5}$-valued
scalar field $\Phi\in \Omega^0(M,\h^{5})$ with dynamics given by a second-order partial differential operator
$\ell^\prime_1(\Phi) \in \Omega^2(M,\h^{4})$ taking values in $\h^{4}$-valued $2$-forms.
The right-most arrow of the top-row describes first-order differential relations $\ell^\prime_1(\lambda^{+})=0$
for $\lambda^{+} \in \Omega^2(M,\h^{4})$, taking values in $\h^{5}$-valued $3$-forms,
that this second-order partial differential operator must satisfy.
With a simple degree counting argument, taking into account the independent differential form components, 
one finds that this system of differential equations is over-determined.
Indeed, there are $3\times \dim(\h)\times 4 = 12\times\dim(\h)$ equations with $1\times \dim(\h)\times 5 = 5\times \dim(\h)$
relations, which gives a total of $7\times \dim(\h)$ equations for the only $1\times \dim(\h)\times 5= 5\times \dim(\h)$
components of the field $\Phi$.
\sk

The bottom row can be interpreted as a theory describing an $\h^{4}$-valued $1$-form
gauge field $\lambda\in \Omega^{1}(M,\h^{4})$ with gauge transformations 
$\lambda \mapsto \lambda + \ell_1^\prime (\chi) $ parametrized by $\h^{5}$-valued
$0$-forms $\chi \in \Omega^0(M,\h^{5})$ and dynamics given by 
a second-order partial differential operator
$\ell^\prime_1(\lambda) \in \Omega^3(M,\h^{5})$ taking values in $\h^{5}$-valued $3$-forms.
Using again a simple degree counting argument as above, one finds that this system of 
differential equations is under-determined. Indeed, there are $3\times\dim(\h)\times 4 = 12\times\dim(\h)$
gauge field components with $1\times \dim(\h)\times 5 = 5\times \dim(\h)$ gauge symmetries, but there are
only $1\times \dim(\h)\times 5= 5\times \dim(\h)$ equations for the resulting $7\times \dim(\h)$ non-gauge redundant
components of the field $\lambda$.
\end{ex}

\begin{ex}\label{ex:complex2}
The underlying cochain complex \eqref{eqn:bdyM} of the $3$-dimensional
integrable field theory which is associated with the singularity structure
and local boundary condition from Example \ref{ex:example2} in the case of $d=2$
takes the form
\begin{flalign}
\FFF(M)\,=\,
\left(\begin{gathered}
\xymatrix@R=0.2em{
0 \ar[r]^-{0}~&~ \Omega^0(M,\h^{4}) \ar[r]^-{\ell_1^\prime}~&~ \Omega^2(M,\h^{3}) \ar[r]^-{\ell_1^\prime}
~&~ \Omega^3(M,\h^{4})\\
\oplus ~&~\oplus ~&~\oplus ~&~\oplus \\
\Omega^0(M,\h^{4}) \ar[r]^-{\ell_1^\prime}~&~ \Omega^1(M,\h^{3}) \ar[r]^-{\ell_1^\prime}~&~ \Omega^3(M,\h^{4}) 
\ar[r]^-{0}~&~ 0
}
\end{gathered}
\right)\quad,
\end{flalign}
where the columns run from cohomological degree $0$ to $3$. This cochain complex
takes the same shape as the one from Example \ref{ex:complex1}, so we can apply 
the same interpretations to this theory. The top row describes
an over-determined system consisting effectively of $3\times \dim(\h)\times 3-1\times \dim(\h)\times 4=5\times\dim(\h)$
differential equations for the only $1\times\dim(\h)\times 4 = 4\times \dim(\h)$ components of the scalar
field $\Phi \in \Omega^0(M,\h^{4}) $. Similarly, the bottom row describes
an under-determined system of only $1\times\dim(\h)\times 4=4\times\dim(\h)$ differential equations
for the $3\times\dim(\h)\times 3 - 1\times\dim(\h)\times 4 = 5\times\dim(\h)$ non-gauge redundant
components of the gauge field $\lambda\in \Omega^1(M,\h^{3})$.
\end{ex}

\begin{ex}\label{ex:complex3}
The underlying cochain complex \eqref{eqn:bdyM} of the $3$-dimensional
integrable field theory which is associated with the singularity structure
and local boundary condition from Example \ref{ex:example3} in the case of $d=2$
takes the form
\begin{flalign}\label{eqn:complex3}
\FFF(M)\,=\,
\left(\begin{gathered}
\xymatrix@R=0.2em{
0 \ar[r]^-{0}~&~ \Omega^0(M,\h) \ar[r]^-{\ell_1^\prime}~&~ \Omega^2(M,\h) \ar[r]^-{\ell_1^\prime}
~&~ \Omega^3(M,\h^{2})\\
\oplus ~&~\oplus ~&~\oplus ~&~\oplus \\
\Omega^0(M,\h^{2}) \ar[r]^-{\ell_1^\prime}~&~ \Omega^1(M,\h) \ar[r]^-{\ell_1^\prime}~&~ \Omega^3(M,\h) 
\ar[r]^-{0}~&~ 0
}
\end{gathered}
\right)\quad,
\end{flalign}
where the columns run from cohomological degree $0$ to $3$. This cochain complex
takes the same shape as the ones from Examples \ref{ex:complex1} and \ref{ex:complex2}, 
so we can apply the same interpretations to this theory. A remarkable feature of the present example
is that, in contrast to the two previous examples, the corresponding systems of differential equations
are well-posed. Before providing a precise statement and proof for well-posedness, let us start with a heuristic degree counting
argument. The top row describes a system consisting effectively of $3\times \dim(\h)\times 1-1\times \dim(\h)\times 2=\dim(\h)$
differential equations for the $1\times\dim(\h)\times 1 =\dim(\h)$ components of the scalar
field $\Phi \in \Omega^0(M,\h) $. Similarly, the bottom row describes
a system of $1\times\dim(\h)\times 1=\dim(\h)$ differential equations
for the $3\times\dim(\h)\times 1 - 1\times\dim(\h)\times 2 = \dim(\h)$ non-gauge redundant
components of the gauge field $\lambda\in \Omega^1(M,\h)$.
\end{ex}

To establish a precise statement and proof for the well-posedness 
of the dynamics described by Example \ref{ex:complex3}, it is necessary to
provide an explicit description of the differential $\ell^\prime_1$ 
of the cochain complex \eqref{eqn:complex3}. From the definition of this complex 
in \eqref{eqn:bdyM}, one observes that this requires explicit models for 
continuous strong deformation retracts of divisor-twisted Dolbeault complexes
to their cohomologies. Starting from Hodge theory, it is possible 
to develop such explicit strong deformation retracts in terms of explicit,
but rather sophisticated, integral formulas over $C=\CP$, 
which will be presented in detail in an upcoming publication \cite{BCSV}.
Applying these integral formulas to the cochain complex \eqref{eqn:complex3},
one finds an explicit description for the differential $\ell^\prime_1$.
To simplify our presentation, we consider only the case where $M=\bbR^3$
is the $3$-dimensional Cartesian space with coordinates denoted by $u^1,u^2,u^3$.
Relative to these coordinates, we describe $1$-forms $\sum_{i}\alpha_i \,\dd u^i$,
$2$-forms $\sum_{i<j}\beta_{ij} \,\dd u^i\wedge\dd u^j$ and $3$-forms
$\gamma \,\dd u^1 \wedge \dd u^2\wedge \dd u^3$ in terms of
their components. We then denote the components of the fields in the top and bottom row of \eqref{eqn:complex3} by
\begin{subequations}\label{eqn:fieldcomponents}
\begin{flalign}
\Phi \,\in\,\Omega^{0}(M,\h)\quad,\qquad \begin{pmatrix}\lambda^{+}_{12}\\ \lambda^{+}_{13}\\ \lambda^{+}_{23} \end{pmatrix}\,\in \,\Omega^2(M,\h)\quad,\qquad \begin{pmatrix}\epsilon^{+}\\ \chi^{+} \end{pmatrix} \,\in\, \Omega^3(M,\h^{2}) \quad,\\
\begin{pmatrix}\epsilon\\ \chi \end{pmatrix}\,\in\,\Omega^{0}(M,\h^{2})\quad,\qquad
\begin{pmatrix} \lambda_1\\ \lambda_2\\ \lambda_3 \end{pmatrix}\,\in\,\Omega^1(M,\h)\quad,\qquad
\Phi^{+}\,\in\,\Omega^3(M,\h) \quad,
\end{flalign}
\end{subequations}
where ${}^+$ is used to indicate anti-fields. From the integral formulas
which will appear in \cite{BCSV}, one can compute the four non-trivial components 
of the differential in \eqref{eqn:complex3}. Using the short-hand notation 
$\partial_i := \frac{\partial}{\partial u^i}$ for $i \in \{ 1,2,3\}$, one finds
\begin{subequations}\label{eqn:differentialcomponents}
\begin{flalign}
\ell^\prime_1\begin{pmatrix} \Phi \end{pmatrix}\,&=\,\begin{pmatrix}
(q_1 - q_2)\, \partial_1 \partial_2 \Phi\\[3pt] (q_1 - q_3) \,\partial_1 \partial_3 \Phi\\[3pt] (q_2 - q_3) \,\partial_2 \partial_3 \Phi
\end{pmatrix}\quad,\\
\ell^\prime_1 \begin{pmatrix} 
\lambda^{+}_{12}\\ \lambda^{+}_{13}\\ \lambda^{+}_{23}
\end{pmatrix}\,&=\,\begin{pmatrix} \partial_1 \lambda^{+}_{23} - 
\partial_2 \lambda^{+}_{13} + \partial_3 \lambda^{+}_{12}\\[3pt]
- q_1 \,\partial_1 \lambda^{+}_{23} + q_2\, \partial_2 \lambda^{+}_{13} - q_3 \,
\partial_3 \lambda^{+}_{12} \end{pmatrix}\quad,\\
\ell^\prime_1 \begin{pmatrix}\epsilon\\ \chi \end{pmatrix}\,&=\, 
\begin{pmatrix} - \tfrac 12 \,\partial_1 (\chi - q_1 \,\epsilon) \\[3pt] - \tfrac 12 \,\partial_2 (\chi - q_2\, \epsilon)\\[3pt] 
- \tfrac 12 \,\partial_3 (\chi - q_3 \,\epsilon) \end{pmatrix}\quad,\\
\ell^\prime_1 \begin{pmatrix} 
\lambda_1\\ \lambda_2\\ \lambda_3
\end{pmatrix}\,&=\, (q_1 - q_2) \, \partial_1 \partial_2 \lambda_3 - (q_1 - q_3)\, \partial_1 \partial_3 \lambda_2 
+ (q_2 - q_3) \, \partial_2 \partial_3 \lambda_1 \quad,
\end{flalign}
\end{subequations}
where $q_1,q_2,q_3 \in \bbC$ are the three zeros of the meromorphic $1$-form $\omega$
from Example \ref{ex:example3} (for $d=2$), written in terms of a choice of global holomorphic coordinate on $\CP\setminus\{\infty\}$.
\sk

We will now show that, in the case where the three zeros are real $q_1,q_2,q_3\in\bbR$ and 
distinct, i.e.\ $q_i\neq q_j$ for all $i\neq j$,
the complex of differential operators \eqref{eqn:complex3}, with differential
given explicitly by \eqref{eqn:differentialcomponents}, admits a Green's witness in
the sense of \cite{BMS}. By the general results of the latter paper, this implies in particular
the existence of a homological variant of Green's operators (called Green's homotopies) controlling
the dynamics and solutions of this theory. A Green's witness is defined as
a linear differential operator $W : \FFF(M)\to \FFF(M)$ of degree $-1$ with the property that
its graded commutator $W\,\ell_{1}^\prime + \ell_1^{\prime}\, W =: P$ with the differential
$\ell_1^\prime$ yields a Green-hyperbolic differential operator $P$, i.e.\ a differential
operator admitting Green's operators. In our example, a Green's witness is given by the 
following four non-trivial components
\begin{subequations}\label{eqn:Wexample}
\begin{flalign}
W\begin{pmatrix} \epsilon^+ \\ \chi^+ \end{pmatrix}\,&=\,\begin{pmatrix}
q_2 \,\partial_2 \epsilon^+ + \partial_2 \chi^+ + q_1 \,\partial_1 \epsilon^+ + \partial_1 \chi^+ \\[3pt]
- q_1 \,\partial_1 \epsilon^+ - \partial_1 \chi^+ + q_3\, \partial_3 \epsilon^+ + \partial_3 \chi^+ \\[3pt] 
- q_2 \,\partial_2 \epsilon^+ - \partial_2 \chi^+ - q_3 \,\partial_3 \epsilon^+ - \partial_3 \chi^+ \end{pmatrix}\quad,\\
W\begin{pmatrix} \lambda_{12}^+\\ \lambda_{13}^+\\ \lambda_{23}^+ \end{pmatrix}\,&=\, \lambda_{12}^+ + \lambda_{13}^+ + \lambda_{23}^+\quad,\\
W\begin{pmatrix} \Phi^+ \end{pmatrix}\,&=\,
\begin{pmatrix} \Phi^+\\ -\Phi^+\\ \Phi^+ \end{pmatrix}\quad,\\
W\begin{pmatrix} \lambda_{1}\\ \lambda_{2}\\ \lambda_{3} \end{pmatrix}\,&=\,\begin{pmatrix}
2\, \big(\partial_2 \lambda_1 + \partial_3 \lambda_1 
- \partial_1 \lambda_2 + \partial_3 \lambda_2 
- \partial_1 \lambda_3 - \partial_2 \lambda_3 \big) \\[3pt]
2\, \big(q_2\,\partial_2 \lambda_1 + q_3\,\partial_3 \lambda_1 
- q_1\,\partial_1 \lambda_2 + q_3\, \partial_3 \lambda_2 
- q_1\,\partial_1 \lambda_3 - q_2\,\partial_2 \lambda_3 \big)
\end{pmatrix}\quad,
\end{flalign}
\end{subequations}
which act horizontally along the rows of \eqref{eqn:complex3}.
By a direct calculation, one finds that the corresponding
differential operator $P:=W\,\ell_{1}^\prime + \ell_1^{\prime}\, W$ 
is given on all individual components by
\begin{subequations}\label{eqn:Poperator}
\begin{flalign}
P \,=\,\sum_{i,j=1}^{3} g^{ij}\,\partial_i\,\partial_j\quad,
\end{flalign} 
where $g^{ij}$ are the entries of the symmetric real $3\times 3$-matrix
\begin{flalign}
g \,=\, \frac{1}{2} \,\begin{pmatrix}
0 & q_1-q_2 & q_1-q_3 \\
q_1-q_2 & 0 & q_2-q_3\\
q_1-q_3 & q_2-q_3 & 0
\end{pmatrix}\quad.
\end{flalign}
\end{subequations}
From the fact that $\mathrm{Tr}(g)=0$ and $\det(g) = \frac{1}{4}\,(q_1-q_2)\,(q_1-q_3)\,(q_2-q_3) \neq 0$,
since by hypothesis $q_i\neq q_j$ for all $i\neq j$, it follows that all three eigenvalues of $g$ are non-zero
and that one eigenvalue must have the opposite sign of the other two. Assuming that $\det(g) < 0$,
which happens if the zeros of $\omega$ are ordered as $q_3 > q_2 > q_1$ or any cyclic permutation thereof,
then one eigenvalue is negative and the other two are positive. This implies that $g$ defines a Lorentzian metric on
spacetime $M=\bbR^3$ of signature $(-++)$. It is quite remarkable that the
spacetime geometry on $M=\bbR^3$ emerges from the singularity structure on $C=\CP$ in this way. By diagonalizing 
the matrix $g$ and rescaling the coordinates, one can transform from the null-coordinates
$u^1,u^2,u^3$ on $M=\bbR^3$ to standard Minkowski coordinates
\begin{flalign}
t \,=\, \frac{u^1 + u^2}{\sqrt{q_2 - q_1}} \quad, \quad
x \,=\, \frac{u^1 - u^2}{\sqrt{q_2 - q_1}} \quad, \quad
y \,=\, \frac{(q_3 - q_2)\, u^1 + (q_3 - q_1) \,u^2 - (q_2 - q_1)\, u^3}{\sqrt{(q_2 - q_1)\,(q_3 - q_1)\,(q_3 - q_2)}}\quad,
\end{flalign}
and thereby finds that 
the differential operator \eqref{eqn:Poperator} is nothing but the standard wave operator
$P = - \frac{\partial^2}{\partial t^2} + \frac{\partial^2}{\partial x^2} + \frac{\partial^2}{\partial y^2}$ 
on Minkowski spacetime $M=\bbR^3$.
Being a normally hyperbolic differential operator on Minkowski spacetime, 
$P$ admits unique retarded and advanced Green's operators
$G^\pm_P$, which determine, by using also our (square zero) Green's witness $W$ from \eqref{eqn:Wexample}, an explicit model for the
retarded and advanced Green's homotopies $\Lambda^\pm := W\,G^\pm_P = G^\pm_P\,W$ for the 
cochain complex \eqref{eqn:complex3}. As a consequence of the general results in \cite{BMS},
this implies that the dynamics, solution theory and even the commutator function entering
the quantization of our $3$-dimensional integrable field theory are all governed by an underlying wave equation.

%%%%%%%%%%%%%%%%%%%%%%%%%%%%%%%%%%%%%%%%%%%%%%%%
%%%%%%%%%%%%%%%%%%%%%%%%%%%%%%%%%%%%%%%%%%%%%%%%

\section*{Acknowledgments}
We would like to thank Severin Bunk for useful discussions.
The work of M.B.\ is supported in part by the MUR Excellence 
Department Project awarded to Dipartimento di Matematica, 
Universit{\`a} di Genova (CUP D33C23001110001) and it is fostered by 
the National Group of Mathematical Physics (GNFM-INdAM (IT)). 
R.A.C., A.S.\ and B.V.\ gratefully acknowledge the support of the Engineering and Physical
Sciences Research Council (UKRI1723).

%%%%%%%%%%%%%%%%%%%%%%%%%%%%%%%%%%%%%%%%%%%%%%%%
%%%%%%%%%%%%%%%%%%%%%%%%%%%%%%%%%%%%%%%%%%%%%%%%
%%%%%%%%%%%%%%%%%%%%%%%%%%%%%%%%%%%%%%%%%%%%%%%%
%%%%%%%%%%%%%%%%%%%%%%%%%%%%%%%%%%%%%%%%%%%%%%%%

%\section*{Data availability statement}
%All data generated or analyzed during this study are contained in this document. 
%
%
%\section*{Conflict of interest statement}
%On behalf of all authors, the corresponding author states that there is no conflict of interest.

%%%%%%%%%%%%%%%%%%%%%%%%

\end{document}